\newcommand{\Bag}{B}
\def\eps{\varepsilon}
\def\tq{\tilde{q}}
\newtheorem{observation}[theorem]{Observation}
\newtheorem{lemm}[theorem]{Lemma}
\newtheorem{coro}[theorem]{Corollary}
\newcolumntype{\expand}{}
\long\@namedef{NC@rewrite@\string\expand}{\expandafter\NC@find}
	\def\problem@arg{#1}%
	\def\problem@framed{framed}%
	\def\problem@lined{lined}%
	\def\problem@doublelined{doublelined}%
	\def\problem@hline{}%
	\def\problem@hline{\hline\hline}%
	\def\problem@hline{\hline}%
	\def\problem@tablelayout{|>{\bfseries}lX|c}%
	\def\problem@title{\multicolumn{2}{|l|}{%
			\raisebox{-\fboxsep}{\textsc{\large #2}}%
	}}%
	\def\problem@tablelayout{>{\bfseries}lXc}%
	\def\problem@title{\multicolumn{2}{l}{%
			\raisebox{-\fboxsep}{\textsc{\large #2}}%
	}}%
\begin{document}
\vspace{-1cm}
\title{Fair allocation of indivisible items with conflict graphs\thanks{A preliminary version containing some of the results presented here appeared in~\cite{iwoca2020}.}}

	\author{Nina Chiarelli
		\and
		Matja\v z Krnc
		\and
		Martin Milani\v c
		\and
		Ulrich Pferschy
		\and
		Nevena Piva\v c
		\and
		Joachim Schauer}
	\authorrunning{N.~Chiarelli et al.}
	
	\institute{N. Chiarelli \at University of Primorska, FAMNIT and IAM, Koper,
	Slovenia,
		\email{nina.chiarelli@famnit.upr.si}
	    \and M. Krnc \at University of Primorska, FAMNIT and IAM, Koper, Slovenia,
	    \email{matjaz.krnc@upr.si}
		\and M. Milani\v c \at University of Primorska, FAMNIT and IAM, Koper, Slovenia
		\email{martin.milanic@upr.si}
		\and U. Pferschy \at University of Graz, Austria,
		\email{ulrich.pferschy@uni-graz.at}
		\and N. Piva\v c \at University of Primorska, FAMNIT and IAM, Koper, Slovenia
		\email{nevena.pivac@iam.upr.si}
		\and J. Schauer \at FH JOANNEUM, Austria,
		\email{joachim.schauer@fh-joanneum.at}}
	
	\date{}
	\maketitle
	
	\vspace{-1cm}
    \begin{abstract}
		We consider the fair allocation of indivisible items to several agents and add a graph theoretical perspective to this classical problem.
	Namely,	we introduce an incompatibility relation between pairs of items described in terms of a conflict graph.
	Every subset of items assigned to one agent has to form an independent set in this graph.
	Thus, the allocation of items to the agents corresponds to a partial coloring of the conflict graph.
	Every agent has its own profit valuation for every item.
	Aiming at a fair allocation, our goal is the maximization of the lowest total profit of items allocated to any one of the agents.
	The resulting optimization problem contains, as special cases, both \textsc{Partition} and \textsc{Independent Set}.
	In our contribution we derive complexity and algorithmic results depending on the properties of the given graph.
	We show that the problem is strongly NP-hard for bipartite graphs and their line graphs, and solvable in pseudo-polynomial time for the classes of chordal graphs, cocomparability graphs, biconvex bipartite graphs, and graphs of bounded treewidth.
	Each of the pseudo-polynomial algorithms can also be turned into a fully polynomial approximation scheme (FPTAS).

		\keywords{Fair division \and Conflict graph \and Partial coloring.}
		\subclass{90C27 \and 05C85 \and 91B32
		\and {90C39} \and {68Q25} \and {68W25}}
	\end{abstract}

\section{Introduction}
\label{sec:intro}

Allocating resources to several agents in a satisfactory way is a classical problem in combinatorial optimization.
In particular, interesting questions arise if agents have different valuations of resources or if additional constraints are imposed for a feasible allocation.
In this work we study the fair allocation of $n$ indivisible goods or items to a set of $k$ agents.
Each agent has its own additive utility function over the set of items.
The goal is to assign every item to exactly one of the agents so that the minimal utility over all agents is as large as possible.
Related problems of fair allocation are frequently studied in Computational Social Choice, see, e.g.,~\cite{hand16}.
Recent papers from this field containing many pointers to the literature and studying fairness issues, also in connection with an underlying graph structure, are given by~\cite{BLM21,ijcai2017-20}.
In the area of Combinatorial Optimization a similar problem is well-known as the \emph{Santa Claus} problem (see~\cite{MR2277128}), which can also be seen as a scheduling problem.

In this paper we look at the problem from a graph theoretical perspective and add a major new aspect to it.
We allow an incompatibility relation between pairs of items, meaning that incompatible items should not be allocated to the same agent.
This can reflect the fact that items rule out their joint usage or simply the fact that certain items are identical (or of a similar type) and it does not make sense for one agent to receive more than one of these items.
We will represent such a relation by a \emph{conflict graph} where vertices correspond to items and edges express incompatibilities.

{
As a more concrete example consider the distribution of transportation orders among a number of shipping partners which should all be treated as equally as possible according to a joint master agreement.
In some industries, goods cannot be combined in an arbitrary way due to safety regulations or rules for hazardous materials
(see \cite{Santos19} for the delivery of goods from incompatible categories to small neighborhood stores).
Then, a conflict graph can be used to express forbidden freight combinations (see, e.g.,~\cite{Factorovich20,Hu15}).

When items represent tasks with a starting and end time, each agent should be allocated a fair subset of non-overlapping tasks.
Again, the mutual exclusion of two tasks/items, will be represented by the edges of a conflict graph (see, e.g.,~\cite{Mallek22,raey09}).
Note that in \cite{Brito21} a general treatment of conflict graphs was performed for the COIN OR Branch-and-Cut (CBC) solver\footnote{https://github.com/coin-or/Cbc}.

In all such scenarios} every feasible allocation to one agent must be an {independent set} in the conflict graph. This means that the overall solution can also be expressed as a \emph{partial $k$-coloring} of the conflict graph $G$, but in addition every vertex/item has a profit value for every color/agent and the sum of profits of vertices/items assigned to one color/agent should be optimized in a maxi-min sense.

We believe that this problem combines aspects of independent sets, graph coloring, and weight partitioning in an interesting way, offering new perspectives to look at these classical combinatorial optimization problems.

{\subsection{Problem definitions}}

\noindent{\bf The classical fair division problem.}
We consider a set $V$ of items with cardinality $|V|=n$ and $k$ profit functions $p_1,\ldots,p_k: V \to \mathbb{Z}_+$.
{An \emph{ordered \hbox{$k$-partition}} of $V$ is  a sequence $(X_1,\ldots,X_k)$ of $k$ pairwise disjoint subsets of $V$ such that $\bigcup_{i = 1}^k X_i = V$.}
The \emph{satisfaction level} of an ordered $k$-partition $(X_1,\ldots,X_k)$ of $V$ (with respect to $p_1,\dots, p_k$) is defined as the minimum of the resulting profits $p_j(X_j) := \sum_{v\in X_j}p_j(v)$, where $j\in \{1,\ldots, k\}$.
The classical fair division problem can be stated as follows.

\medskip

\noindent\parbox{0.82\linewidth}{\noindent
	{\sc Fair $k$-Division of Indivisible Items}\\[1.2ex]
	\begin{tabular*}{.8\textwidth}{ll}
		\noindent{\bf Input:} & A set $V$ of $n$ items, $k$ profit functions $p_1,\ldots,p_k: V \to \mathbb{Z}_+$.\\[0.5ex]
		\noindent{\bf Task:} & Compute an ordered $k$-partition of $V$ with maximum satisfaction\\ &  level.
	\end{tabular*}
}

\medskip
\noindent{\bf Connections with scheduling and knapsack problems.}
For the special case where all $k$ profit functions are identical, i.e., $p_1= p_2=\ldots =p_k$,
the problem can also be represented in a scheduling setting. There are $k$ identical machines and $n$ jobs, which have to be assigned to the machines by a $k$-partitioning. The goal is to maximize the minimal completion time (corresponding to the satisfaction level) over all $k$ machines.
It was pointed out in \cite{DFL1982} that this problem is weakly NP-hard even for $k=2$ machines.
Indeed, it is easy to see that an algorithm deciding the above scheduling problem for two machines would also decide the classical \textsc{Partition} problem: given $n$ integers $a_1,\ldots, a_n$, can they be partitioned into two subsets with equal sums? For $k\geq 3$, one can simply add jobs of length one half of the sum of weights in the instance of \textsc{Partition}.
If $k$ is not fixed, but part of the input, the same scheduling problem is strongly NP-hard {as mentioned in \cite{azep98} (a~PTAS was derived in~\cite{MR1452078})}.
In fact, an instance of the strongly NP-complete \textsc{3-Partition} problem with $3m$ elements and target bound $B$ could be decided by any algorithm for the scheduling problem
with $n=3m$ jobs, $k=m$ machines and a desired minimal completion time equal to $B$.
We conclude for later reference.

\medskip
\begin{observation}\label{obs:SantaNP}
	\textsc{Fair $k$-Division of Indivisible Items}, even with $k$ identical profit functions, is weakly NP-hard for any constant $k\geq 2$ and strongly NP-hard for $k$ being part of the input.
\end{observation}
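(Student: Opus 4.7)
The observation has essentially been justified in the preceding paragraphs, so my plan is just to formalize the reductions sketched there into a clean proof.

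For the first part (weak NP-hardness with constant $k \geq 2$), I would reduce from the classical \textsc{Partition} problem. Given an instance with positive integers $a_1,\dots,a_n$ of total sum $S$, I construct a \textsc{Fair $k$-Division} instance with $k=2$, identical profit functions $p_1=p_2$, and items $v_1,\dots,v_n$ having $p_1(v_i)=a_i$. An ordered $2$-partition $(X_1,X_2)$ has satisfaction level at least $S/2$ if and only if $p_1(X_1)=p_1(X_2)=S/2$, which is equivalent to a \textsc{Partition} solution. For $k \geq 3$, I extend the construction by appending $k-2$ additional ``dummy'' items, each of profit $S/2$. The total profit becomes $kS/2$, so a satisfaction level of $S/2$ forces equality for every agent. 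Since no two dummy items can share an agent (their combined profit would exceed $S/2$), each dummy must occupy its own agent, leaving the original $n$ items to be split between the remaining two agents with equal sums---a \textsc{Partition} solution. The converse is immediate.

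For the second part (strong NP-hardness with $k$ part of the input), I would reduce from \textsc{3-Partition}, which is strongly NP-complete: given $3m$ positive integers $a_1,\dots,a_{3m}$ with $\sum a_i = mB$ and $B/4 < a_i < B/2$, decide whether the elements can be partitioned into $m$ triples each summing to $B$. I build a \textsc{Fair $k$-Division} instance with $k=m$ agents, items $v_1,\dots,v_{3m}$, and identical profit functions with $p_j(v_i)=a_i$. A satisfaction level of at least $B$ requires each of the $m$ agents to receive items totalling exactly $B$ (since the overall sum is $mB$), which by the size bounds on the $a_i$ means exactly three items per agent---giving a \textsc{3-Partition} solution. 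The reverse direction is immediate. Because all numerical data in the reduction is copied directly from the \textsc{3-Partition} instance, the reduction is polynomial even in the unary encoding, yielding strong NP-hardness.

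There is no real obstacle here: both reductions are standard and already outlined informally in the excerpt. The only minor care needed is the explicit verification, in each case, that the satisfaction-level threshold forces every agent to achieve exactly the average profit, which is what allows one to decode a partition from a fair allocation.
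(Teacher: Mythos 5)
Your proof is correct and follows essentially the same route as the paper, which sketches exactly these two reductions (from \textsc{Partition} with $k-2$ dummy items of profit $S/2$, and from \textsc{3-Partition} with $k=m$ and threshold $B$) in the paragraphs preceding the observation. The only cosmetic detail left implicit is that when $S$ is odd one should reject immediately (or scale all values by two) so that the dummy profits are integers, but this does not affect the validity of the argument.
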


{
Note that for $k = 2$, the decision version of	{\sc Fair $k$-Division of Indivisible Items} also generalizes the decision version of the {\sc Knapsack} problem:
Given a set $V = \{1,\ldots, n\}$ of items with weights $w_1,\ldots, w_n\in \mathbb{Z}_+$ and values $v_1,\ldots, v_n\in \mathbb{Z}_+$, and two positive integers $W$ and $C$ such that $W<\sum_{j\in V}w_j$, is there a subset of the items having total weight at most $W$ and total value at least $C$?
\footnote{Indeed, by considering two profit functions $p_1,p_2:V\to \mathbb{Z}_+$ defined by $p_1(i) = \Delta\cdot v_i$ where $\Delta = \sum_{j\in V}w_j-W$ and $p_2(i) = C\cdot w_i$ for all $i\in V$, it is not difficult to verify that such a set $S$ exists if and only if $V$ admits an ordered $2$-partition with satisfaction level at least $C\cdot \Delta$.} }

{
It should be noted that \textsc{Fair $k$-Division of Indivisible Items}} is still only weakly NP-hard for constant $k$ even for arbitrary profit functions, since we can construct a pseudo-polynomial algorithm solving the problem with a $k$-dimensional dynamic programming array.

\bigskip
\noindent{\bf Our generalization.}
In this paper we study a generalization of \textsc{Fair $k$-Division of Indivisible Items}, where a \emph{conflict graph} $G = (V,E)$ on the set $V$ of items to be divided is introduced.
An edge $\{i,j\} \in E$ means that items $i$ and $j$ should not be assigned to the same subset of the partition.
Allocating items in a conflict-free way immediately gives rise to (partial) colorings of the graph, a concept studied by Berge~\cite{MR989117} and de Werra~\cite{MR1097650}.

\medskip

\begin{definition}
	A \emph{partial $k$-coloring} of a graph $G$ is a sequence $(X_1,\ldots,X_k)$ of $k$ pairwise disjoint independent sets in $G$.
\end{definition}

Combining the profit structure with the notion of coloring
we define for the $k$ profit functions $p_1,\ldots,p_k: V \to \mathbb{Z}_+$ and for each partial $k$-coloring $c = (X_1,\ldots, X_k)$
a $k$-tuple $(p_1(X_1),\ldots, p_k(X_k))$, called the \emph{profit profile} of $c$.
The minimum profit of a profile, i.e., $\min_{j=1}^k \{p_j(X_j)\}$, is the \emph{satisfaction level} of $c$.
Now we can define the problem considered in this paper:

\medskip
\noindent\parbox{0.82\linewidth}{\noindent
	{\sc Fair $k$-Division Under Conflicts}\\[1.2ex]
	\begin{tabular*}{.9\textwidth}{ll}
		\noindent{\bf Input:} & A graph $G = (V,E)$, $k$ profit functions $p_1,\ldots,p_k: V \to \mathbb{Z}_+$.\\[0.5ex]
		\noindent{\bf Task:} & Compute a partial $k$-coloring of $G$ with maximum satisfaction\\& level.
	\end{tabular*}
}

\medskip
In the hardness reductions of this paper we will frequently use the decision version of this problem: for a given $q \in \mathbb{Z}_+$, does there exist a partial $k$-coloring of $G$ with satisfaction level at least $q$?

Note that an optimal partial $k$-coloring $(X_1,\ldots,X_k)$ does not necessarily select all vertices from $V$.
Furthermore, note also that for $k=1$, the problem coincides with the \textsc{Weighted Independent Set} problem: {given a graph $G = (V,E)$ and a weight function on the vertices, find an independent set of maximum total weight.}
In particular, since the case of unit weights and $k = 1$ coincides with the \textsc{Independent Set} problem, we obtain the following result.

\begin{observation}\label{obs:NP-strong-k=1}
	\textsc{Fair $1$-Division Under Conflicts} is strongly NP-hard.
\end{observation}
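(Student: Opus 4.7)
The plan is to give a direct reduction from the classical \textsc{Independent Set} problem, whose strong NP-hardness is standard. Given an instance $(G,q)$ of \textsc{Independent Set}, I would build an instance of \textsc{Fair $1$-Division Under Conflicts} by keeping the same graph $G$, introducing the single constant profit function $p_1 \equiv 1$, and asking whether there is a partial $1$-coloring of $G$ with satisfaction level at least $q$.

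The first step is to unpack the definitions: a partial $1$-coloring of $G$ is, by Definition~1, just an independent set $X_1$ in $G$, and with $p_1 \equiv 1$ its satisfaction level equals $p_1(X_1) = |X_1|$. Hence the answer to the decision version of our constructed instance is ``yes'' if and only if $G$ contains an independent set of size at least $q$. This immediately reduces \textsc{Independent Set} to the decision version of \textsc{Fair $1$-Division Under Conflicts}.

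The second step is to verify that the reduction preserves \emph{strong} NP-hardness, not merely NP-hardness. This is automatic: every numerical quantity in the constructed instance (each profit $p_1(v) = 1$ and the threshold $q \leq n$) is bounded by a polynomial in the size of $G$, so the encoding length of the numbers is irrelevant. In particular, a pseudo-polynomial algorithm for \textsc{Fair $1$-Division Under Conflicts} would yield a polynomial-time algorithm for \textsc{Independent Set}.

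There is essentially no substantive obstacle here; the only thing to be careful about is the word ``strongly'', which forces the reduction to use unary-bounded numerical data rather than, say, exponentially large weights. The choice $p_1 \equiv 1$ takes care of this trivially, and the observation follows.
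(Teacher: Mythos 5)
Your proof is correct and follows the same route as the paper: the paper also observes that for $k=1$ a partial $1$-coloring is just an independent set, so with unit profits the problem specializes to \textsc{Independent Set}, and the polynomially bounded numbers give strong NP-hardness. Nothing is missing.
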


Thus, the addition of the conflict structure gives rise to a much more complicated problem, since \textsc{Fair $k$-Division of Indivisible Items} (which arises naturally as a special case for an edgeless conflict graph $G$) is trivial for $k=1$ and only weakly NP-hard for $k\ge 2$ (see Observation~\ref{obs:SantaNP}).

\begin{figure}[h]
	\centering
	\begin{tikzpicture}[xscale=3,yscale=1.8]
	\tikzset{roundedbox/.style={draw,rectangle,rounded corners}}
	\tikzset{cornerbox/.style={draw,rectangle}}
	\scriptsize
	\node[cornerbox] (bipartitePermutation) at (-0.7,-0.2) {\mathstrut Bipartite permutation graphs};
	\node at (-0.3,-0.4) {\mathstrut \textsf{PP}};
	\node[roundedbox] (biconvexBipartite) at (-1.2,0.8)
	{\mathstrut Biconvex bipartite graphs};
	\node at (-1.4,0.6) {\mathstrut \textsf{PP} (Thm.~\ref{thm:biconvex})};
	\node[roundedbox] (bipartite) at (-1,2.05) {\mathstrut Bipartite graphs};
	\node at (-1.36,1.85) {\mathstrut \textsf{sNPc} (Thm.~\ref{thm:bipartite})};
	\node[cornerbox] (permutation) at (-0.1,1.5) {\mathstrut Permutation graphs};
	\node at (0.1,1.3) {\mathstrut \textsf{PP}};
	\node[cornerbox] (interval) at (0.3,0.8) {\mathstrut Interval graphs};
	\node at (0.5,0.6) {\mathstrut \textsf{PP}};
	\node[roundedbox] (cocomparability) at (0.5,2.3) {\mathstrut Cocomparability graphs};
    \node at (0.05,2.1) {\mathstrut \textsf{PP} (Thm.~\ref{thm:cocomp})};
    \node[roundedbox] (chordal) at (1.5,2.35) {\mathstrut Chordal graphs};
    \node at (1.8,2.15) {\mathstrut \textsf{PP} (Thm.~\ref{thm:chordal})};
	\node[cornerbox] (comparability) at (-0.6,3) {\mathstrut Comparability graphs};
	\node at (-0.3,2.8) {\mathstrut \textsf{sNPc}};
	\node[cornerbox] (perfect) at (0.3,3.9) {\mathstrut Perfect graphs};
	\node at (-0.1,3.7) {\mathstrut \textsf{sNPc}};
	\node[roundedbox] (lineBip) at (1.7,3.2) {\mathstrut Line graphs of bipartite graphs};
	\node at (1.4,3) {\mathstrut \textsf{sNPc} (Thm.~\ref{thm:line-bipartite})};
	\node[roundedbox] (bddtw) at (1.6,0.8) {\mathstrut Graphs of bounded treewidth};
	\node at (1.8,0.6) {\mathstrut \textsf{PP} (Thm.~\ref{thm:boundedtreewidth})};
	\node[cornerbox] (forests) at (0.35,-0.2) {\mathstrut Forests};
    \node at (0.45,-0.4) {\mathstrut \textsf{PP}};
        \node[cornerbox] (edgeless) at (-0.1,-0.9) {\mathstrut Edgeless graphs};
        \node at (0.1,-1.1) {\mathstrut \textsf{PP} ({\sc Knapsack} for $k=2$)};
	\draw[semithick,-latex,font=\sffamily] (bipartitePermutation) to (biconvexBipartite);
	\draw[semithick,-latex,font=\sffamily] (biconvexBipartite) to (bipartite);
	\draw[semithick,-latex,font=\sffamily] (bipartite) to (comparability);
	\draw[semithick,-latex,font=\sffamily] (interval) to (cocomparability);
	\draw[semithick,-latex,font=\sffamily] (interval) to (chordal);
	\draw[semithick,-latex,font=\sffamily] (chordal) to (perfect);
	\draw[semithick,-latex,font=\sffamily] (permutation) to (comparability);
	\draw[semithick,-latex,font=\sffamily] (permutation) to (cocomparability);
	\draw[semithick,-latex,font=\sffamily] (comparability) to (perfect);
	\draw[semithick,-latex,font=\sffamily] (cocomparability) to (perfect);
	\draw[semithick,-latex,font=\sffamily] (bipartitePermutation) to (permutation);
	\draw[semithick,-latex,font=\sffamily] (lineBip) to (perfect);
	\draw[semithick,-latex,font=\sffamily] (forests) to (chordal);
	\draw[semithick,-latex,font=\sffamily] (forests) to (bipartite);
	\draw[semithick,-latex,font=\sffamily] (forests) to (bddtw);
        \draw[semithick,-latex,font=\sffamily] (edgeless) to (forests);
        \draw[semithick,-latex,font=\sffamily] (edgeless) to (interval);
        \draw[semithick,-latex,font=\sffamily] (edgeless) to (bipartitePermutation);
        \draw[semithick, out=20, in=270] (edgeless) to (2.42,0.7);
        \draw[semithick,-latex,font=\sffamily, out=90, in=300] (2.42,0.7) to (lineBip);
	\draw[dashed] (-1.65,1.7) .. controls (0.5,1) and (-1.3,3.8) .. (2.5,2.45);
	\end{tikzpicture}
	\caption{Relationships between various graph classes and the complexity of \textsc{Fair $k$-Division Under Conflicts} {(decision version)}. An arrow from a class $\mathcal{G}_1$ to a class $\mathcal{G}_2$ means that every graph in $\mathcal{G}_1$ is also in $\mathcal{G}_2$. Label `\textsf{PP}' means that {for each fixed $k$} the problem is solvable in pseudo-polynomial time in the given class, and label `\textsf{sNPc}' means that {for each fixed $k\ge 2$} the  decision version of the problem is strongly NP-complete.
For graph classes with round corners the result is shown in the cited theorem of this paper.
Results depicted in rectangles follow from the inclusion of graph classes.
For all graph classes in the figure,
		the problem is solvable in strongly polynomial time for $k = 1$, as it coincides with the \textsc{Weighted Independent Set} problem.}\label{fig:Hasse}
\end{figure}
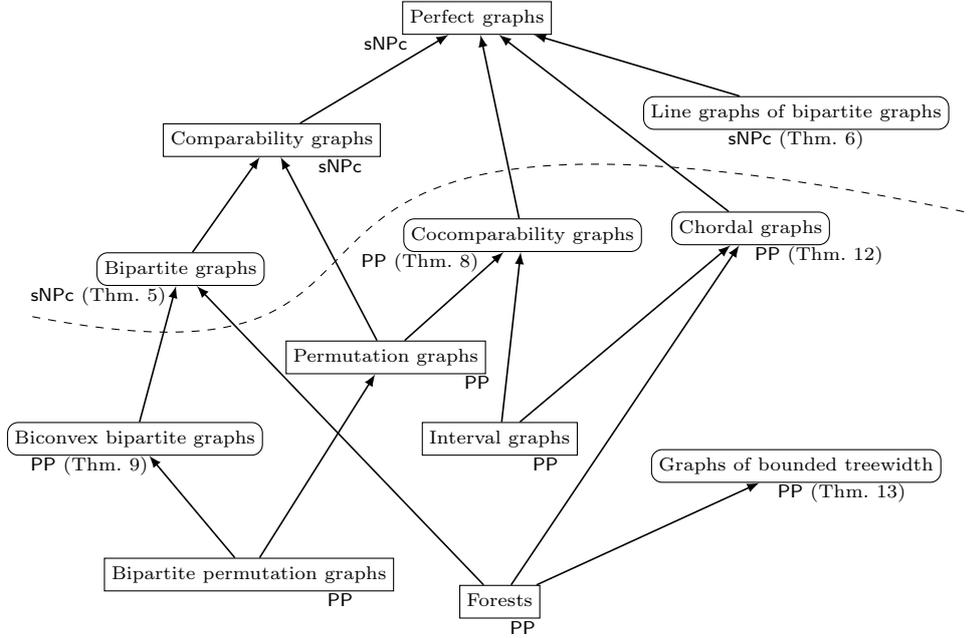

{\subsection{Our goal and contributions}}

The goal of our research is a characterization of the computational complexity of \textsc{Fair $k$-Division Under Conflicts} for different classes of conflict graphs.
We study the boundary between strongly NP-hard cases and those where a pseudo-polynomial algorithm can be derived for a constant $k$.
Observation~\ref{obs:SantaNP} implies that this is the only type of positive result we can achieve.
Moreover, considering Observation~\ref{obs:NP-strong-k=1}, it only makes sense to consider graph classes where the \textsc{Weighted Independent Set} problem is \hbox{(pseudo-)polynomially} solvable.
One such prominent example is the class of perfect graphs (see~\cite{MR936633}).
Thus, in this paper we concentrate (mainly) on various subclasses of perfect graphs as depicted in Figure~\ref{fig:Hasse}.
Additionally, we show how to adapt the algorithm for chordal graphs to obtain a pseudo-polynomial algorithm for graphs of bounded treewidth.
{
For $k=2$ our pseudopolynomial dynamic programming approaches generalize the standard dynamic program for the {\sc Knapsack} problem.}

Our contributions are as follows.
We first show that for all $k\ge 1$, the decision version of our \textsc{Fair $k$-Division Under Conflicts} is strongly NP-complete for conflict graphs from any graph class $\mathcal{G}$ for which \textsc{Independent Set} is NP-complete, provided a certain mild technical `extendability' condition is satisfied (Section~\ref{sec:generalhardness}).
By a similar reasoning we can also reach a strong inapproximability result for our problem. For bipartite conflict graphs as well as their line graphs \textsc{Fair $k$-Division Under Conflicts} can be shown to be strongly NP-hard for all $k\ge 2$ (Section~\ref{sec:bipartite}), even though the corresponding \textsc{Weighted Independent Set} problem is polynomial-time solvable.
On the other hand, for the relevant special case of biconvex bipartite graphs (cf.\ \cite{khoda13}, \cite{mast12}),
\textsc{Fair $k$-Division Under Conflicts} can be solved by a pseudo-polynomial time algorithm.
This result is based on an insightful pseudo-polynomial algorithm for the problem on a cocomparability conflict graph (Section~\ref{sec:poly}).
Besides these results, in Section~\ref{sec:poly} we present dynamic programming based solutions for the classes of chordal graphs and graphs of bounded treewidth.
{Finally, Section~\ref{sec:fptas} explains how fully polynomial time approximation schemes (FPTAS) can be derived from the pseudo-polynomial algorithms of this paper.
Figure~\ref{fig:Hasse} gives on overview of the results.}

\medskip
{\subsection{Overview of related work}}

The first elaborate treatment of the \textsc{Fair $k$-Division of Indivisible Items} problem was given in \cite{beda05}, where two approximation algorithms with non-constant approximation ratios were given.
The authors also mention that the problem cannot be approximated by a factor better than $1/2$
(under $\textrm{P} \neq \textrm{NP}$).
In~\cite{golo05} further approximation results were derived,
{among them a bicriteria approximation algorithm, which allocates a guaranteed fraction of the optimal solution value to almost all agents. }
In 2006 Bansal and Sviridenko~\cite{MR2277128} coined the term \emph{Santa Claus} problem, which corresponds to the variant of the above problem when $k$ is not fixed but part of the input.
{
Since then various approximation results have appeared on this problem of allocating indivisible items exploring different concepts of objective functions and various approximation measures, see, e.g.,~\cite{CCK09,AS10}.
}

{
An interesting variant is the {\em maximin share} concept.
Here, one considers the hypothetical scenario where every agent is allowed to partition the set of items into $k$ subsets and receives the least valued subset.
An allocation should give to every agent at least
that amount.
While this is known to be impossible in general, several approximation algorithms were derived, see \cite{KPW18,amanatidis2017approximation,BK20,GHS21}.
}

A different specialization is assumed in the widely studied \emph{Restricted Max-Min Fair Allocation} problem.
This is a special case of \textsc{Fair $k$-Division of Indivisible Items}  where every item $v_i \in V$ has a fixed valuation $p(v_i)$ and every agent either likes or ignores item $v_i$, i.e., the profit function $p_j(v_i) \in \{0, p(v_i)\}$. A fairly recent overview of approximation results both for this restricted setting as well as for the general case of the Santa Claus problem can be found in \cite{aks17}.

{Disjunctive constraints represented by conflict graphs were considered in the literature for a wide variety of combinatorial optimization problems.
Related to the allocation problem studied in this paper, there is the
knapsack problem with conflicts for which several exact algorithms were developed, most recently by \cite{Coniglio21}.
Moreover, from a similar perspective as in the current paper \cite{pfsch09,pfsch17} identified special graph classes as conflict graphs which still permit a pseudopolynomial solution algorithm.
Also the distribution of items into bins as required in the classical bin packing problem has some resemblance to \textsc{Fair $k$-Division of Indivisible Items}, where (not all) items are distributed to a fixed number of agents.
The bin packing problem with a conflict graph was studied in a number of papers, most notably in \cite{mimt10}, \cite{Sadykov13}, and \cite{Fleszar22}.
Also scheduling problems, where the allocation of jobs to machines is subject to pairwise conflicts between certain jobs, should be named as a related optimization problem.
The resulting complexity and approximation questions were considered, e.g., in \cite{boja95}, \cite{raey09}, \cite{Furman18}, and most recently in~\cite{Mallek22}.

From a more general perspective, various optimization problems on graphs were studied with the feature of an added conflict structure, e.g.,~\cite{dpsw11}, \cite{Saffari22}, and \cite{flowconflict13}.
Recently, \cite{Miao20} presented an interesting model for consistency in databases based on a conflict graph.
This widespread attention to conflict graphs in combinatorial optimization underlines the relevance of investigating disjunctive constraints also for our fair allocation problem.

The problem studied by Berge~\cite{MR989117} and de Werra~\cite{MR1097650} is similar to {\sc Fair \hbox{$k$-Division} Under Conflicts} but differs from it in one crucial aspect: instead of maximizing the minimum profit of a profile, the goal is to maximize the sum $\sum_{j=1}^k p_j(X_j)$ of all the profits.
Furthermore, they considered the case of unit profit functions $p_j:V\to \{1\}$, for all $j$, that is, the  the \textsc{Maximum Induced $k$-Colorable Subgraph} problem.
This problem has been extensively studied in the literature (see, e.g.,~\cite{MR912032,MR882643,MR3897528,MR4401492});
the case $k= 2$ is
is equivalent to the \textsc{Odd Cycle Transversal} problem
(see, e.g., \cite{MR2057781,MR4153286}).

Berge~\cite{MR989117} gave a sufficient condition for a partial $k$-coloring to be optimal, in terms of existence of a particular family of cliques, and gave several characterizations of graphs for which this condition is satisfied by every optimal solution.
Using connections with perfect graphs and balanced hypergraphs, Berge showed that line graphs of bipartite multigraphs satisfy this property.
De Werra~\cite{MR1097650} continued this line of research, applying network flow techniques and linear programming to several classes of graphs.
These characterizations rely on a min-max relation, which does not hold in general but does hold for several classes of perfect graphs (including the classes of comparability and cocomparability graphs).
The above results imply the existence of polynomial-time algorithm for the \textsc{Maximum Induced $k$-Colorable Subgraph} problem in the corresponding class of graphs, since the problem  reduces to that of finding a maximum independent set in a derived perfect graph.
Berge~\cite{MR989117} asked if for every $k$, the problem is solvable in polynomial time in the class of perfect graphs.
This is not the case unless P = NP, since Addario-Berry et al.~\cite{MR2602826} identified a subclass of perfect graphs on which the problem is NP-complete already for $k= 2$.

Due to the non-linearity of the objective function, we have no reason to expect similar min-max results for {\sc Fair $k$-Division Under Conflicts} for $k\ge 2$.
The intuition that this seems to be a much more complicated problem than \textsc{Maximum Induced $k$-Colorable Subgraph} is also confirmed by the hardness results developed in this paper, in particular, that for all $k\ge 2$ the problem is strongly $\textrm{NP}$-complete in the classes of bipartite graphs and their line graphs.}

\medskip
{\subsection{Definitions and notation}}

All graphs considered in this paper are finite, simple, and undirected. A vertex in a graph $G$ is said to be \emph{isolated} if it has no neighbors and \emph{universal} if it is adjacent to all other vertices.  A \emph{clique} in a graph $G$ is a set of pairwise adjacent vertices and an \emph{independent set} is a set of pairwise nonadjacent vertices. A \emph{matching} in $G$ is a set of pairwise disjoint edges, and a matching $M$ is \emph{perfect} if every vertex of $G$ is an endpoint of an edge of $M$. For a graph $G = (V,E)$ and a set $X\subseteq V$, we denote by $G[X]$ the \emph{subgraph of $G$ induced by $X$}, that is, the graph with vertex set $X$ in which two vertices are adjacent if and only if they are adjacent in $G$. Given two graphs $G$ and $H$, we say that $G$ is \emph{$H$-free} if no induced subgraph of $G$ is isomorphic to $H$.

\bigskip
\section{Hardness results}
\label{sec:hardness}

Observation \ref{obs:NP-strong-k=1} shows that \textsc{Fair $k$-Division Under Conflicts} is strongly NP-hard even for $k=1$ for general graphs, while Observation~\ref{obs:SantaNP} shows the weak NP-hardness of the problem for constant $k\ge 2$ in the absence of conflicts.
In what follows, we show that \textsc{Fair $k$-Division Under Conflicts} is strongly NP-hard also for all $k\ge 2$, for various well-known graph classes.

\subsection{General hardness results}
\label{sec:generalhardness}

We start with the following general property of graph classes. Let us call a graph class ${\mathcal G}$ \emph{sustainable} if every graph in the class can be enlarged in polynomial time to a graph in the class by adding to it one vertex. More formally, ${\mathcal G}$ is sustainable if there exists a polynomial-time algorithm that computes for every graph $G\in {\mathcal G}$ a graph $G'\in {\mathcal G}$ and a vertex $v\in V(G')$ such that $G'-v = G$. Clearly, any class of graphs closed under adding isolated vertices, or under adding universal vertices is sustainable. This property is shared by many well known graph classes, including planar graphs, bipartite graphs, chordal graphs, perfect graphs, etc. Furthermore, all graph classes defined by a single nontrivial forbidden induced subgraph are sustainable.

\begin{lemm}\label{lem:H-free}
For every graph $H$ with at least two vertices, the class of $H$-free graphs is sustainable.
\end{lemm}

\begin{proof}
Let ${\mathcal G}$ be the class of $H$-free graphs and let $G\in {\mathcal G}$. Since $H$ has at least two vertices, it cannot have both a universal and an isolated vertex. If $H$ has no universal vertex, then the {graph obtained from $G$ by adding to it a universal vertex} results in a graph in ${\mathcal G}$ properly extending $G$. If $H$ has no isolated vertex, then the disjoint union of $G$ with {the one-vertex graph} results in a graph in ${\mathcal G}$ properly extending $G$.
\qed\end{proof}

For an example of a graph class ${\mathcal G}$ closed under vertex deletion that is not sustainable, consider the family of all cycles and their induced subgraphs. Then every cycle is in ${\mathcal G}$ but cannot be extended to a larger graph in ${\mathcal G}$.
The importance of sustainable graph classes for \textsc{Fair $k$-Division Under Conflicts} is evident from the following theorem.

\begin{theorem}\label{thm:sustainable}
Let ${\mathcal G}$ be a sustainable class of graphs {and let $k$ be a positive integer such that} the decision version of \textsc{Fair $k$-Division Under Conflicts} is (strongly) NP-complete.
Then, for every $\ell\ge k$, the decision version of \textsc{Fair $\ell$-Division Under Conflicts} with conflict graphs from ${\mathcal G}$
is (strongly) NP-complete.
\end{theorem}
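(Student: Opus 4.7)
Plan. I would prove Theorem~\ref{thm:sustainable} by giving, for each fixed $\ell\ge k$, a polynomial-time many-one reduction from the (strongly) NP-complete decision version of \textsc{Fair $k$-Division Under Conflicts} on $\mathcal{G}$ to the decision version of \textsc{Fair $\ell$-Division Under Conflicts} on $\mathcal{G}$. The idea is to ``pad'' a given $k$-agent instance with $\ell-k$ fresh agents together with $\ell-k$ fresh vertices, one per new agent, each serving as that agent's private present.

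Construction. Given an input $(G,p_1,\ldots,p_k,q)$ with $G\in\mathcal{G}$, I would apply sustainability $\ell-k$ times to obtain a chain $G=G_k,G_{k+1},\ldots,G_\ell$ of graphs in $\mathcal{G}$ where each $G_i$ arises from $G_{i-1}$ by adding one new vertex $v_i$; set $G':=G_\ell$. On $V(G')=V(G)\cup\{v_{k+1},\ldots,v_\ell\}$ I would define the profits as follows: for $j\le k$, let $p'_j(v)=p_j(v)$ for $v\in V(G)$ and $p'_j(v_i)=0$ for $i>k$; for $i>k$, let $p'_i(v_i)=q$ and $p'_i(u)=0$ for every other vertex $u$. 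The produced instance asks for a partial $\ell$-coloring of $G'$ with satisfaction level at least $q$. Since $\ell-k$ is constant, the construction is polynomial (in fact additive) in the input size, and no new large numbers are introduced, so strong NP-completeness is preserved.

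Equivalence. In one direction, any partial $k$-coloring $(X_1,\ldots,X_k)$ of $G$ with satisfaction level $\ge q$ extends to a partial $\ell$-coloring $(X_1,\ldots,X_k,\{v_{k+1}\},\ldots,\{v_\ell\})$ of $G'$: the added singletons are independent and disjoint from all $X_j$, and each new agent $i$ collects profit $q$ from $v_i$. Conversely, in any partial $\ell$-coloring $(X'_1,\ldots,X'_\ell)$ of $G'$ of satisfaction level $\ge q$, each new agent $i>k$ must have $v_i\in X'_i$, since $v_i$ is the unique vertex with nonzero profit for agent $i$ and $p'_i(v_i)=q$; then restricting the first $k$ classes to $V(G)$ yields a partial $k$-coloring of $G$ whose profits under $p_1,\ldots,p_k$ coincide with those of the $\ell$-coloring (the dummy vertices contribute $0$ to $p'_j$ for $j\le k$) and are therefore all $\ge q$. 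Membership in NP is immediate, as a partial $\ell$-coloring is a polynomial-size certificate.

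Main obstacle. The only subtle point is that sustainability only guarantees the existence of \emph{some} one-vertex extension, with no control over the adjacencies of the newly added vertex to vertices already present. This causes no harm for us: in the forward direction each $v_i$ sits alone in its class, so independence is automatic regardless of the edges at $v_i$; in the backward direction the zero entries $p'_j(v_i)=0$ for $j\ne i$ ensure that the adjacencies of $v_i$ in $G'$ cannot be exploited to beat the reduction. Iterating the sustainability step $\ell-k$ times (a constant number of times) stays inside $\mathcal{G}$, which is what makes the reduction land in the intended problem.
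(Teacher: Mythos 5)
Your proposal is correct and follows essentially the same route as the paper: pad the instance with $\ell-k$ new vertices obtained by iterating sustainability, give each new agent profit $q$ on its private vertex and $0$ elsewhere, give the old agents profit $0$ on the new vertices, and observe the equivalence. Your explicit remarks on the harmlessness of the new vertices' adjacencies and on preservation of strong NP-completeness are points the paper leaves implicit, but the argument is the same.
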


\begin{proof}
Let ${\mathcal G}$ be a sustainable class of graphs for which the decision version of \textsc{Fair $k$-Division Under Conflicts} is (strongly) NP-complete and let $\ell > k$.
Let $(G,p_1,\ldots, p_k,q)$ be an instance of \textsc{Fair $k$-Division Under Conflicts} (decision version) such that $G\in {\mathcal G}$. Since ${\mathcal G}$ is sustainable, {one can compute in polynomial time} a graph $G'\in {\mathcal G}$
such that $G'-\{x_1,\ldots, x_{\ell-k}\} = G$ for some  $\ell-k$ additional vertices $x_1,\ldots,  x_{\ell-k}$. We now define the profit functions $p_1',\ldots,p_{\ell}': V(G') \to \mathbb{Z}_+$.
For all $j = 1,\ldots, k$, let
  \begin{align*}
    \ensuremath{p'_{j}(v)=\begin{cases}
        p_{j}(v) & \text{if }\ensuremath{v\in V(G)},                                   \\
        0 & \text{if }\ensuremath{v\in\left\{x_{j}\mid 1\le j\le \ell-k\right\} .}
      \end{cases}}
  \end{align*}
and in addition let, for all $j = k+1,\ldots, \ell$, let
  \begin{align*}
    p_{j}(v)=\begin{cases}
      q & \text{if }v=x_{j-k},                                             \\
      0    & \text{if }v\in V\left(G'\right)\setminus\left\{x_{j-k}\right\}.
    \end{cases}
  \end{align*}
Observe that $G'$ has a partial $k$-coloring $(X_1',\ldots,X_k')$ such that $p_j'(X_j')\ge q$ for all $j= 1,\ldots, \ell$ if and only if $G$ has a partial $k$-coloring $(X_1,\ldots,X_k)$ such that $p_j(X_j)\ge q$ for all $j= 1,\ldots, k$. Since all the numbers involved in the reduction are polynomially bounded, we conclude that \textsc{Fair $\ell$-Division Under Conflicts} with conflict graphs from ${\mathcal G}$ is also (strongly) NP-complete.
\qed\end{proof}

{Since the \textsc{Independent Set} problem is a special case of  \textsc{Fair \hbox{$1$-Division} Under Conflicts},} Theorem~\ref{thm:sustainable} immediately implies the following.

\begin{coro}\label{cor:sustainable}
Let ${\mathcal G}$ be a sustainable class of graphs for which {the decision version of} \textsc{Independent Set} is NP-complete. Then, for every $k\ge 1$, the decision version of \textsc{Fair $k$-Division Under Conflicts} with conflict graphs from ${\mathcal G}$ is strongly NP-complete.
\end{coro}

It is known (see, e.g.,~\cite{MR765704}) that for every graph $H$ that has a component that is not a path or a subdivision of the claw (the complete bipartite graph $K_{1,3}$), the decision version of \textsc{Independent Set} is NP-complete on $H$-free graphs.
Thus, for every such graph $H$, Lemma~\ref{lem:H-free} and Corollary~\ref{cor:sustainable} imply that for every $k\ge 1$, \textsc{Fair $k$-Division Under Conflicts} (decision version) with $H$-free conflict graphs is strongly NP-complete.
{Further exploiting the relation to \textsc{Independent Set}, we also get the following strong inapproximability result for general graphs.
Its proof is closely related to the inapproximability result for \textsc{Independent Set}, but to keep the paper self-contained, we include the detailed construction in Appendix~\ref{appendixA}.
}

\begin{theorem}\label{th:inapprox}
For every $k\ge 1$ and every $\varepsilon>0$, it is $\textrm{NP}$-hard to approximate
\textsc{Fair $k$-Division Under Conflicts} within a factor of
$|V(G)|^{1-\varepsilon}$, even for unit profit functions.
\end{theorem}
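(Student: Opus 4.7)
The plan is to reduce \textsc{Max Independent Set} to \textsc{Fair $k$-Division Under Conflicts} with unit profits and invoke the H{\aa}stad--Zuckerman inapproximability bound, namely that for every $\varepsilon'>0$, no polynomial-time algorithm approximates $\alpha(G)$ within a factor of $|V(G)|^{1-\varepsilon'}$, unless $\mathrm{P}=\mathrm{NP}$.

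For $k=1$ with unit profits, a partial $1$-coloring $(X_1)$ is simply an independent set and its profit is $|X_1|$, so the two problems coincide and the desired inapproximability is immediate.

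For $k\ge 2$, given a graph $G$ on $n$ vertices, the plan is to form the graph $G'$ that is the disjoint union of $k$ copies $G_1,\ldots,G_k$ of $G$, so $n':=|V(G')|=kn$. Viewing $G'$ as an instance of \textsc{Fair $k$-Division Under Conflicts} with unit profits, the first step is to show that the optimum value equals $\alpha(G)$: on the one hand, the total size of any partial $k$-coloring is at most $\alpha(G')=k\alpha(G)$, so the minimum color class has size at most $\alpha(G)$; on the other hand, taking as color $i$ a maximum independent set of $G_i$ achieves this bound. The second step is the inverse direction. Given a partial $k$-coloring $(X_1,\ldots,X_k)$ of $G'$ with $\min_{i}|X_i|\ge \alpha(G)/r$, fix an index $i$ attaining the minimum; since $X_i=\bigcup_{j=1}^k(X_i\cap V(G_j))$ is a disjoint union of $k$ sets, pigeonhole yields some $j$ with $|X_i\cap V(G_j)|\ge \alpha(G)/(kr)$, and this set is an independent set in $G$. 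Consequently, an $(n')^{1-\varepsilon}$-approximation for \textsc{Fair $k$-Division Under Conflicts} delivers a $k\cdot(kn)^{1-\varepsilon}$-approximation for \textsc{Max Independent Set} on~$G$.

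To finish, for any fixed $k\ge 2$ and any $\varepsilon>0$ one has $k(kn)^{1-\varepsilon}\le n^{1-\varepsilon/2}$ for all sufficiently large $n$, so the existence of a polynomial-time $(n')^{1-\varepsilon}$-approximation would contradict the H{\aa}stad--Zuckerman bound applied with parameter $\varepsilon'=\varepsilon/2$. I do not foresee a substantive obstacle; the only delicate point is to calibrate the slack in the exponents correctly when translating the approximation factor on the $kn$-vertex instance $G'$ back to one on the $n$-vertex instance $G$, and this is routine. Note that the reduction preserves unit profits throughout and uses only a disjoint union as the conflict graph, matching the statement of the theorem.
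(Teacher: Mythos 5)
Your reduction differs from the paper's in one structural choice: you take the \emph{disjoint union} of $k$ copies of $G$, whereas the paper takes the \emph{join} of $k$ copies (all edges between distinct copies are added). This difference matters, because your first step contains a false claim. You assert that the total size of any partial $k$-coloring of $G'$ is at most $\alpha(G')=k\alpha(G)$, hence that the optimum satisfaction level equals $\alpha(G)$. But a partial $k$-coloring is a family of \emph{pairwise disjoint} independent sets whose union need not be independent, so its total size is not bounded by $\alpha(G')$. Concretely, for $G=K_2$ and $k=2$ the graph $G'$ is two disjoint edges with $\alpha(G')=2$, yet taking $X_1$ to consist of one endpoint from each edge and $X_2$ the two remaining endpoints gives a partial $2$-coloring with satisfaction level $2>\alpha(G)=1$. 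So the optimum of your constructed instance can strictly exceed $\alpha(G)$, and the claimed equality fails. The paper's join construction is designed precisely to avoid this: since every independent set of the join lies entirely inside one copy, every color class is an independent set of $G$, the optimum is exactly $\alpha(G)$, and no pigeonhole step is needed.

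That said, your argument survives deletion of the false sentence, because the rest of it only uses the \emph{lower} bound $\mathrm{OPT}(G')\ge\alpha(G)$ (witnessed by placing a maximum independent set of $G_i$ into color $i$), which is correct. An $(n')^{1-\varepsilon}$-approximation then returns a coloring with $\min_i|X_i|\ge \mathrm{OPT}(G')/(kn)^{1-\varepsilon}\ge \alpha(G)/(kn)^{1-\varepsilon}$, and your pigeonhole extraction yields an independent set of $G$ of size at least $\alpha(G)/\bigl(k(kn)^{1-\varepsilon}\bigr)$; the calibration $k^{2-\varepsilon}n^{1-\varepsilon}\le n^{1-\varepsilon/2}$ for $n\ge k^{(4-2\varepsilon)/\varepsilon}$ (with brute force for smaller $n$, which you should state explicitly as the paper does) then contradicts Zuckerman's bound. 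So the proof is repairable with a one-line fix, but as written the step ``the minimum color class has size at most $\alpha(G)$'' is wrong and must go; alternatively, switch to the paper's join construction, which makes the optimum exactly $\alpha(G)$ and removes the extra factor $k$ from the pigeonhole.
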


\subsection{Bipartite graphs and their line graphs}\label{sec:bipartite}

In this section we show that for all $k\ge 2$, \textsc{Fair $k$-Division Under Conflicts} is $\textrm{NP}$-hard in two classes of graphs where the \textsc{Weighted Independent Set} problem is solvable in polynomial time: the classes of bipartite graphs and their line graphs.
{Recall that for a graph $H$, its line graph has a vertex for each edge of $H$, with two distinct vertices adjacent in the line graph if and only if the corresponding edges share an endpoint in $H$.
Polynomial-time solvability of the \textsc{Weighted Independent Set} problem in the class of bipartite graphs is well-known from a reduction to a network flow problem~(see, e.g.,~\cite[Corollary 21.25a]{MR1956924}).
For line graphs of bipartite graphs polynomial-time solvability follows from the facts that we can compute in linear time a bipartite graph $H$ such that the input graph $G$ is the line graph of $H$~\cite{MR347690,MR424435} and that the \textsc{Weighted Independent Set} problem on $G$ is equivalent to the weighted matching problem on $H$.
Clearly, polynomial-time solvability for the two classes also follows from the fact that both classes are subclasses of the class of perfect graphs (cf.~Figure~\ref{fig:Hasse} and~\cite[Section 66.1]{MR1956924}}).

The proof for bipartite graphs shows strong NP-hardness even for the case when all the profit functions are equal.

\begin{theorem}\label{thm:bipartite}
For each integer $k\ge 2$, the decision version of \textsc{Fair $k$-Division Under Conflicts} is strongly $\textrm{NP}$-complete in the class of bipartite graphs.
\end{theorem}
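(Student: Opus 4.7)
The strategy is to first reduce the general case $k\ge 2$ to the base case $k=2$ by invoking the sustainability machinery of Section~\ref{sec:generalhardness}, and then to establish strong NP-hardness of the $k=2$ case directly by a reduction from a strongly NP-hard partition problem. Since the class of bipartite graphs is closed under adding isolated vertices, it is sustainable. Hence by Theorem~\ref{thm:sustainable} it suffices to prove strong NP-completeness of the decision version of \textsc{Fair $2$-Division Under Conflicts} restricted to bipartite conflict graphs; the cases $k\ge 3$ then follow automatically. Membership in NP is routine, since for a candidate pair $(X_1,X_2)$ one can check in polynomial time that both $X_i$ are independent sets of $G$, that they are disjoint, and that each satisfies $p(X_i)\ge q$.

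For NP-hardness of the $k=2$ case with equal profit functions, the plan is a polynomial-time reduction from \textsc{3-Partition}, which is strongly NP-complete. Given an instance $(a_1,\ldots,a_{3m},B)$ with $\sum_{i=1}^{3m} a_i = mB$ and $B/4<a_i<B/2$, I would build, in polynomial time in the unary input size, a bipartite graph $G$ together with a single profit function $p$ and a threshold $q$ so that $G$ has a partial $2$-coloring of satisfaction level $q$ if and only if the $a_i$ can be partitioned into $m$ triples of sum $B$. On one side of the bipartition I place the $3m$ item vertices with profits $a_i$; on the other side I place a family of heavy "anchor'' vertices of carefully chosen weights and connect them to the item vertices by edges tuned so that, in any partial $2$-coloring attaining profit $q$ in both color classes, every item must be colored, and the anchors are forced into an essentially unique pattern across $X_1$ and $X_2$. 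This pattern constrains the items assigned to each color class so tightly that the resulting distribution corresponds exactly to a balanced $3$-partition.

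The principal obstacle is the mismatch between the $m$ groups demanded by \textsc{3-Partition} and the mere $k=2$ color classes available. The key idea to bridge this gap is to arrange the anchors into a bipartite "ladder'' whose only balanced $2$-colorings alternate along the ladder: although only two global colors exist, the ladder subdivides each color class into $m$ "virtual slots,'' and the item-to-anchor adjacencies ensure that each virtual slot must receive a subset of items summing to exactly $B$ in order to reach the threshold $q$. Thus the partial $2$-colorings of value $q$ stand in bijection with valid 3-partitions. Correctness in both directions is then routine, and since every integer used in the construction is polynomial in the unary size of the input, the reduction establishes the claimed strong NP-completeness.
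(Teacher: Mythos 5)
Your reduction of the general case to $k=2$ via sustainability is fine (bipartite graphs are closed under adding isolated vertices, so Theorem~\ref{thm:sustainable} applies), and membership in NP is indeed routine. But the heart of the matter --- strong NP-hardness for $k=2$ --- is not proved. You never specify the anchor weights, the adjacencies, or the threshold $q$, and the one mechanism you do describe cannot work. The satisfaction level imposes exactly two numerical constraints, namely $p(X_1)\ge q$ and $p(X_2)\ge q$, each on the \emph{total} profit of a color class. Your ``virtual slots'' are subsets of a single color class, and their profits simply add up inside that total; there is no way for the single inequality $p(X_1)\ge q$ to force $m$ separate sub-sums to each equal $B$. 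Concretely, if one slot received sum $B+1$ and another $B-1$, the class total and all independence constraints would be unchanged, so the coloring would remain feasible --- the adjacency structure only controls \emph{which} vertices may coexist in a class, never how the profit within a class is distributed among slots. This is precisely the obstacle you acknowledge (two colors versus $m$ groups), and the ladder does not overcome it; with two colors and a min-of-sums objective the natural numerical problem you can encode is \textsc{Partition}, which is only weakly NP-hard.

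The paper avoids this trap by reducing from \textsc{Clique} rather than from a number-partitioning problem: the bipartite conflict graph is the vertex--edge incidence graph of the \textsc{Clique} instance $G$ augmented with $k$ heavy vertices $x_1,\ldots,x_k$ (with $x_2,\ldots,x_k$ adjacent to all of $V(G)$), and the profits and threshold are tuned so that any coloring of satisfaction level $q$ must place exactly $\binom{\ell}{2}$ edge-vertices and at least $n-\ell$ vertex-vertices into $X_1$; independence of $X_1$ then forces those $\binom{\ell}{2}$ edges to span at most $\ell$ vertices of $G$, i.e., a clique. The numerical work there is mere bookkeeping with polynomially bounded profits, and the combinatorial forcing comes from the graph structure, not from per-group sum constraints. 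You would need to either adopt such a graph-theoretic source problem or exhibit an explicit, verified 3-\textsc{Partition} gadget; as written, your argument has a gap exactly where the difficulty lies.
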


\begin{proof}
We use a reduction from the decision version of the \textsc{Clique} problem: Given a graph $G$ and an integer $\ell$, does $G$ contain a clique of size $\ell$? Consider an instance $(G,\ell)$ of \textsc{Clique} such that $2\le \ell < n := |V(G)|$. We define an instance of \textsc{Fair $k$-Division Under Conflicts} (decision version) consisting of a bipartite conflict graph $G'$, profit functions $p_1,\ldots, p_k$, and a lower bound $q$ on the required satisfaction level. The graph $G'=(A\cup B, E')$ has a vertex for each vertex of the graph $G$ as well as for each edge of $G$ and $k$ new vertices $x_1,\ldots, x_k$.
It is defined as follows:
\begin{align*}
     A&=V(G)\cup \left\lbrace x_1 \right\rbrace ,\
     B=E(G)\cup \left\lbrace x_i \mid 2\le i\le k \right\rbrace ,\\
    E'&=\left\{ ve\mid v\in V(G)\text{ is an endpoint of }e\in E(G)\right\}
     \cup
     \left\lbrace vx_{i}\mid v\in V(G),2\le i\le k\right\rbrace.
  \end{align*}
  The lower bound $q$ on the satisfaction level is defined by setting
    $q=n^4+\binom{\ell}{2}n+(n-\ell)$. For ease of notation we set $N_1 =  n^4$ and we furthermore introduce a second integer $N_2$ such that
  $q = N_{2}+\left(m-\binom{\ell}{2}\right)n$,
  where $m = |E(G)|$.  (Note that $N_2\ge n^3$.)
  With this, the profit functions \hbox{$p_i:V(G')\to\mathbb{Z}_+$}, for all $i\in \{1,\ldots, k\}$, are defined as
  $$p_i(v)=\left\{
  \begin{array}{ll}
      1;     & \hbox{if $v\in V(G)$;}\\
      n;     & \hbox{if $v\in E(G)$;}\\
      N_1;   & \hbox{if $v = x_1$;}\\
      N_2;   & \hbox{if $v = x_2$;}\\
      q ;    & \hbox{if $v = x_j$ for some $j\in \{3,\ldots, k\}$}.
  \end{array}
\right.$$
  Note that all the profits introduced as well as the number of vertices and edges of $G'$ are polynomial in $n$.
  To complete the proof, we show that $G$ has a clique of size $\ell$ if and only if $G'$ has a partial $k$-coloring with satisfaction level at least $q$.
  First assume that $G$ has a clique $C$ of size $\ell$. We construct a partial $k$-coloring $c = (X_1,\ldots, X_k)$ of $G'$ by setting
  \begin{align*}
    X_{1} & =\left\{ x_{1}\right\} \cup\left\{ e\in E(G)\mid e\subseteq C\right\} \cup\left(V(G)\setminus C\right), \\
    X_{2} & =\left\{ x_{2}\right\} \cup\left(E(G)\setminus X_{1}\right),                                         \\
    X_{j} & =\left\{ x_{j}\right\}\text{ for }3\le j\le k.
  \end{align*}
  Observe that the partial $k$-coloring $c$ gives rise to the corresponding profit profile with all entries equal to $q$, which establishes one of the two implications.

 Suppose now that there exists a partial $k$-coloring $c= (X_1,\ldots,X_k)$ of  $G'$ for which the profit profile has all entries $\geq q$. Since for each $i\in \{1,\ldots, k\}$, the total profit of the set $V(G)\cup E(G)$ is only $mn+n<n^4$,  the partial coloring $c$ must use exactly one of the $k$ vertices $x_1,\ldots, x_k$ in each color class. We may assume without loss of generality that $x_i\in X_i$ for all $i\in \{1,\ldots, k\}$. Let $U$ be the set of uncolored vertices in $G'$ w.r.t.~the partial coloring $c$. Since for each of the profit functions $p_i$, the difference between the overall sum of the profits of vertices of $G'$ and $k\cdot q$ is equal to $\ell$, we clearly have $\sum_{v\in U}p_i(v)\le \ell<n$, which implies that $U\subseteq V(G)$.
Next, observe that every vertex of $E(G)$ belongs to either $X_1$ or to $X_2$, since otherwise we would have $p_1(X_1)+p_2(X_2)<2q$,
contrary to the assumption that the satisfaction level of $c$ is at least $q$.

Consider the sets $W = X_1\cap V(G)$ and $F = X_1\cap E(G)$. Then $X_1 = \{x_1\}\cup W\cup F$ and, since $\sum_{v\in X_{1}}p_1(v)\ge q = N_{1}+\binom{\ell}{2}n+(n-\ell)$, it follows that $X_1$ contains exactly $\binom{\ell}{2}$ vertices from $E(G)$ (if $|F|>{\ell\choose 2}$, then $p_2(X_2)<q$) and at least $n-\ell$ vertices from $V(G)$. Let $C$ denote the set of all vertices of $G'$ with a neighbor in $F$. By the construction of $G'$ and since $|F| = {\ell\choose 2}$, it follows that $C$ is of cardinality at least $\ell$. Furthermore, since $X_1$ is independent, we have $C\cap W=\emptyset$.
Consequently, $n = |V(G)|\ge |C|+|W|\ge \ell + (n-\ell) = n$, hence equalities must hold throughout.
In particular, $C$ is  a clique of size $\ell$ in $G$.
\qed\end{proof}

\begin{theorem}\label{thm:line-bipartite}
For each integer $k\ge 2$, the decision version of \textsc{Fair $k$-Division Under Conflicts} is strongly $\textrm{NP}$-complete in the class of line graphs of bipartite graphs.
\end{theorem}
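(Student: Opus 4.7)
My plan is to mirror the proof of Theorem~\ref{thm:bipartite}, reducing from \textsc{Clique}. First, the class of line graphs of bipartite graphs is sustainable: adding a disjoint edge to a bipartite graph $H$ preserves bipartiteness and corresponds to adding an isolated vertex to $L(H)$. Hence, by Theorem~\ref{thm:sustainable}, it suffices to establish strong NP-completeness for the single case $k=2$.

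For the reduction, given an instance $(G,\ell)$ of \textsc{Clique} with $n=|V(G)|$ and $m=|E(G)|$, I would construct a bipartite graph $H$ whose edges play three roles: ``vertex-item'' edges $e_v$ (one per $v\in V(G)$), ``edge-item'' edges $f_e$ (associated with each $e\in E(G)$), and two families of ``dummy'' edges $x_1$ and $\{x_2^{(v)}\}_{v\in V(G)}$. The dummy $x_1$ is a disjoint edge of $H$, hence an isolated vertex of $L(H)$; it receives a large profit $N_1$ in color~$1$ and $0$ in color~$2$. A single universal dummy $x_2$ cannot be used as in Theorem~\ref{thm:bipartite} because $L(H)$ is claw-free; instead I would use the family $\{x_2^{(v)}\}_{v\in V(G)}$, each $x_2^{(v)}$ sharing a single endpoint of $H$ with $e_v$, together carrying the total profit $N_2$ in color~$2$. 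The incidences of $H$ would ensure that $e_v$ and $f_e$ share a vertex of $H$ iff $v\in e$ in $G$. Unit profit is given to each $e_v$ and profit $n$ to each $f_e$ in both colors, and the threshold is $q=N_1+\binom{\ell}{2}n+(n-\ell)$.

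The correctness proof then follows the pattern of Theorem~\ref{thm:bipartite}. In the forward direction, from an $\ell$-clique $C$ in $G$ I would build $M_1$ from $x_1$, the vertex-items for $v\notin C$, and a selection of edge-items corresponding to edges of $C$; and $M_2$ from the entire dummy family $\{x_2^{(v)}\}_{v\in V(G)}$ together with the remaining edge-items. These are two edge-disjoint matchings of $H$ with both color-profits reaching $q$. In the backward direction, the sizes of $N_1$ and $N_2$ force $x_1\in M_1$ and $\{x_2^{(v)}\}_{v\in V(G)}\subseteq M_2$, which in turn excludes every $e_v$ from $M_2$; the counting argument used in the proof of Theorem~\ref{thm:bipartite} then forces $M_1$ to contain at most $n-\ell$ vertex-items and at least $\binom{\ell}{2}$ edge-items, whose corresponding $G$-edges must cover a clique of size exactly~$\ell$.

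The main obstacle is encoding the adjacencies of Theorem~\ref{thm:bipartite} inside a line graph of a bipartite graph. Claw-freeness prevents the dummy $x_2$ from being a single edge, and the fact that the $f_e$'s are actual edges of $H$ restricts which collections of them can simultaneously form a matching. I expect that the vertex-gadgets in $H$ must be made slightly richer than a single edge---for instance, short bipartite paths providing several distinct ports for the incident edge-items---so that the $\binom{\ell}{2}$ edge-items associated with the edges of a clique can indeed coexist in one matching of $H$, while the backward counting remains tight. Since all weights used in the reduction are polynomially bounded in $n$, this yields strong NP-completeness for $k=2$, and hence for every $k\geq 2$.
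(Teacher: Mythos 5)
Your reduction to the case $k=2$ via sustainability of line graphs of bipartite graphs is exactly the paper's argument. But your treatment of the case $k=2$ itself has a genuine gap: the reduction from \textsc{Clique} is not actually constructed, and the missing piece is precisely the hard part. In the proof of Theorem~\ref{thm:bipartite}, the conflict pattern is that each vertex-item $v$ conflicts with every edge-item $e\ni v$, while edge-items never conflict with one another. To realize this in $L(H)$ for a bipartite $H$, each $f_e$ with $v\in e$ must share an endpoint of $H$ with the gadget for $v$, yet the up to $n-1$ edge-items incident to $v$ must remain pairwise non-adjacent in $L(H)$ (they all need to coexist in a single matching, e.g.\ in $X_2$ in the forward direction). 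A single edge $e_v$ offers only two endpoints, so at least three edge-items meeting $v$ would pairwise share an endpoint and hence pairwise conflict --- which breaks the forward direction. You acknowledge this and propose ``richer'' path gadgets with many ports, but then a vertex-item is no longer a single unit-profit vertex of the conflict graph, an edge-item attached at one port no longer conflicts with the whole gadget, and the tight counting argument ($q=N_1+\binom{\ell}{2}n+(n-\ell)$, slack exactly $\ell$) that pins down the clique is no longer available as stated. Until such a gadget is exhibited and the counting re-verified, the reduction does not stand.

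For comparison, the paper sidesteps gadgeteering entirely by reducing from a different problem: deciding whether a bipartite graph $G$ has two disjoint matchings $M_1,M_2$ with $M_1$ perfect and $|M_2|\ge Q$ (NP-complete by P\'alv\"olgyi). Taking the conflict graph to be $L(G)$ with the two constant profit functions $p_1\equiv Q$ and $p_2\equiv n/2$ and threshold $q=nQ/2$, a partial $2$-coloring of $L(G)$ is exactly a pair of disjoint matchings, and the profit bounds force $|X_1|\ge n/2$ (so $X_1$ is perfect) and $|X_2|\ge Q$. If you want to salvage your approach, you would either need to design and verify the multi-port gadget in full, or switch to a source problem that is already phrased in terms of matchings.
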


\begin{proof}

Note that it suffices to prove the statement for $k = 2$. For $k>2$, Theorem~\ref{thm:sustainable} applies, since the class of line graphs of bipartite graphs is sustainable.
Indeed, if $G'$ is the line graph of a bipartite graph $G$, then the graph obtained from $G'$ by adding to it an isolated vertex is the line graph of the bipartite graph obtained from $G$ by adding to it an isolated edge.

For $k = 2$, we use a reduction from the following problem:
Given a bipartite graph $G$ and an integer $Q$, does $G$ contain two disjoint matchings $M_1$ and $M_2$ such that $M_1$ is a perfect matching and $|M_2|\ge Q$? This problem was shown to be NP-complete by P\'alv\"olgi (see~\cite{palvolgyi2014partitioning}). Consider an instance $(G,Q)$ of this problem such that $1\le Q\le n/2$ and $n = |V(G)|$ is even.
Then we define the following instance of the decision version of \textsc{Fair $2$-Division Under Conflicts} with a conflict graph $G'$,
where $G'$ is the line graph of $G$.
The lower bound $q$ on the satisfaction level is defined by setting $q = n\cdot Q/2$.
The profit functions $p_1, p_2 : V(G')\to\mathbb{Z}_+$
are defined as $p_1(v)=Q$ for all $v\in V(G')$, and $p_2(v)=n/2$ for all $v\in V(G')$.
Clearly, all the profits introduced as well as the number of vertices and edges of $G'$ are polynomial in $n$.
Recall that every matching in $G$ corresponds to an independent set in $G'$.

We now show that the instances of the two decision problems have the same answers.
Suppose first that $G$ has two disjoint matchings $M_1$ and $M_2$ such that $M_1$ is a perfect matching and $|M_2|\ge Q$.
Then the sequence $(M_1,M_2)$ is a partial $2$-coloring of $G'$ such that
$$p_1(M_1) = Q|M_1| = Q\cdot n/2= q \text{ and } p_2(M_2) = (n/2)\cdot|M_2| \geq (n/2) Q = q.$$
Conversely, suppose that $G'$ has a partial $2$-coloring $(X_1,X_2)$ with satisfaction level at least $q$.
Then the independent sets $X_1$ and $X_2$ in $G'$ are disjoint matchings in $G$.
Moreover, since
$$p_1(X_1) = Q|X_1| \ge  q = Q\cdot n/2
\text{ and }
p_2(X_2) = (n/2)\cdot|X_2| \ge  q = Q\cdot n/2,$$
we obtain
$|X_1|\ge n/2$ and $|X_2| \ge Q$.
Thus, $X_1$ is a perfect matching in $G$ and
any set of $Q$ edges in $X_2$ is a matching in $G$ disjoint from $X_1$.
This proves that the decision version of \textsc{Fair $2$-Division Under Conflicts} is strongly $\textrm{NP}$-complete in the class of line graphs of bipartite graphs.
\qed\end{proof}

\section{Pseudo-polynomial algorithms for special graph classes}
\label{sec:poly}

In this section we turn our attention to classes of graphs for which the \textsc{Fair $k$-Division Under Conflicts} is solvable in pseudo-polynomial time.
As shown in Theorem~\ref{thm:bipartite}, for each $k\ge 2$, \textsc{Fair $k$-Division Under Conflicts} is strongly $\textrm{NP}$-complete in the class of bipartite graphs, and this rules out the existence of a pseudo-polynomial time algorithm for the problem in the class of bipartite graphs, unless $\textrm{P} =  \textrm{NP}$.
We show that for every $k$ there is a pseudo-polynomial time algorithm for the \textsc{Fair $k$-Division Under Conflicts} in a subclass of bipartite graphs, the class of \emph{biconvex bipartite graphs}
{(see the definition in Section~\ref{sec:biconvex})}.
The algorithm reduces the problem to the class of bipartite permutation graphs.
To solve the problem in the class of bipartite permutation graphs, we develop a solution in a more general class of graphs, the class of cocomparability graphs (containing permutation graphs).
Further, using a dynamic programming approach, we show that for every $k$ there is a pseudo-polynomial time algorithm for \textsc{Fair $k$-Division Under Conflicts} in the classes of chordal graphs
and graphs of bounded treewidth.
It will be shown in Section~\ref{sec:fptas} that all these
pseudo-polynomial dynamic programming algorithms allow the construction of a fully polynomial time approximation scheme (FPTAS).

Let us first fix some notation.
Given a graph $G$ and $k$ profit functions $p_1,\ldots, p_k:V\to \mathbb{Z}_+$, we denote by $n$ the number of vertices in $G$, $n=|V(G)|$.
All pseudo-polynomial results in this section depend on an upper bound on the maximum reachable profit value $Q=\max_{1\le j\le k}p_j(V)$.
Given an integer $k>0$, the addition and subtraction of $k$-tuples is defined component-wise, and for all $\ell \in \{1,\ldots, k\}$, we denote by $\mathbf{e}_\ell(x)$ the $k$-tuple with all coordinates equal to $0$, except that the $\ell$-th coordinate is equal to $x$.

\subsection{Cocomparability graphs}

A graph $G = (V,E)$ is a \emph{comparability graph} if it has a transitive orientation, that is, if each of the edges $\{u,v\}$ of $G$ can be replaced by exactly one of the ordered pairs $(u,v)$ and $(v,u)$ so that the resulting set $A$ of directed edges is transitive (that is, for every three vertices $x,y,z\in V$, if $(x,y)\in A$ and $(y,z)\in A$, then $(x,z)\in A$). A graph $G$ is a \emph{cocomparability graph} if its complement is a comparability graph. Comparability graphs and cocomparability graphs are well-known subclasses of perfect graphs. The class of cocomparability graphs is a common generalization of the classes of interval graphs, permutation graphs, and trapezoid graphs (see, e.g., \cite{MR1686154,MR2063679}).

Since every bipartite graph is a comparability graph, Theorem~\ref{thm:bipartite} implies that for each $k\ge 2$, \textsc{Fair $k$-Division Under Conflicts} is strongly $\textrm{NP}$-complete in the class of comparability graphs.
For cocomparability graphs, we prove that the problem is solvable in pseudo-polynomial time.
The key result in this direction is the following lemma.

\begin{lemm}\label{lem:cocomp}
For every $k\ge 1$, given a cocomparability graph $G=(V,E)$ and $k$ profit functions $p_1,\ldots, p_k:V\to \mathbb{Z}_+$, the set of all profit profiles of partial $k$-colorings of $G$ can be computed in time $\mathcal{O}(n^{k+2}(Q+1)^k)$, where $Q = \max_{1\le j\le k}p_j(V)$.
\end{lemm}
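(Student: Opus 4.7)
The plan is to exploit a special linear ordering of $V(G)$ derived from the cocomparability structure of $G$ and then perform a dynamic programming along this order in which, for each configuration of chain endpoints, we track the full set of achievable profit profiles. First, since $\overline{G}$ is a comparability graph, I fix a transitive orientation $\prec$ of $\overline{G}$ and take a linear extension $v_1,\ldots,v_n$ of $\prec$; this ordering can be computed in polynomial time. The key structural property is the \emph{umbrella-free} condition: for every $i<j<\ell$, if $v_iv_\ell\in E$ then $v_iv_j\in E$ or $v_jv_\ell\in E$ (otherwise transitivity of $\prec$ would yield $v_i\prec v_\ell$, contradicting $v_iv_\ell\in E$). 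Iterating this implication shows that $\{v_{i_1},\ldots,v_{i_t}\}$ with $i_1<\cdots<i_t$ is independent in $G$ exactly when every consecutive pair $v_{i_s},v_{i_{s+1}}$ is non-adjacent.

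Next, for each endpoint vector $\mathbf{e}=(e_1,\ldots,e_k)\in\{0,1,\ldots,n\}^k$ with pairwise distinct nonzero entries, I let $\mathcal{F}(\mathbf{e})$ denote the set of profit profiles $(p_1(X_1),\ldots,p_k(X_k))$ attained by some partial $k$-coloring $(X_1,\ldots,X_k)$ with $\max\{i:v_i\in X_j\}=e_j$ (and $X_j=\emptyset$ iff $e_j=0$). The base case is $\mathcal{F}(\mathbf{0})=\{\mathbf{0}\}$. I process configurations in increasing order of $m:=\max_j e_j$: writing $j^\ast$ for the unique index with $e_{j^\ast}=m$, the entry at $\mathbf{e}$ is obtained from predecessors in which $e_{j^\ast}$ is replaced by some $e'\in\{0,1,\ldots,m-1\}\setminus\{e_\ell:\ell\neq j^\ast\}$ satisfying either $e'=0$ or $v_{e'}v_m\notin E$; each such predecessor contributes its profit profiles shifted by $\mathbf{e}_{j^\ast}(p_{j^\ast}(v_m))$. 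The crucial justification that a single non-adjacency check $v_{e'}v_m\notin E$ with the current endpoint suffices (rather than checking non-adjacency against every earlier vertex of chain $j^\ast$) is exactly what the umbrella-free property delivers. The set of all profit profiles of partial $k$-colorings of $G$ is then returned as $\bigcup_{\mathbf{e}}\mathcal{F}(\mathbf{e})$.

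For the complexity analysis, there are $O(n^k)$ admissible endpoint vectors; each stores at most $(Q+1)^k$ profit profiles (since each coordinate lies in $\{0,\ldots,Q\}$); and the recursion inspects $O(n)$ predecessors per configuration with $O(1)$ work per profile, yielding the claimed bound $\mathcal{O}(n^{k+2}(Q+1)^k)$. The main obstacle I foresee is twofold: (i) rigorously establishing, via the umbrella-free property, that the local non-adjacency check with the current chain endpoint suffices to guarantee global independence of the chain; and (ii) correctly orchestrating $k$ interacting chains in the DP, in particular forbidding predecessor endpoints that collide with the current endpoints of other chains, and respecting the induced DAG order among configurations when computing $\mathcal{F}(\mathbf{e})$.
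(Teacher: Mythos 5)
Your proposal is correct and follows essentially the same route as the paper: order the vertices by a topological sort of a transitive orientation of $\overline{G}$, characterize independent sets as increasing sequences with consecutive pairs non-adjacent (the paper phrases this as directed paths in the oriented complement, you via the umbrella-free property), and run a dynamic program over endpoint tuples storing sets of achievable profit profiles, extending the chain with the maximum endpoint. Your only deviation is indexing the table by the endpoint vector alone rather than by a prefix length together with the endpoint vector, which in fact saves a factor of $n$ and still yields the claimed bound.
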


\begin{proof}
Let $G$ be a cocomparability graph. In time $\mathcal{O}(n^2)$, we compute the complement of $G$ and a transitive orientation $D$ of it~\cite{MR795937}.
Since $D$ is a directed acyclic graph, one can compute in linear time a topological sort of $D$, that is, an ordering $v_1,\ldots, v_n$ of the vertices such that if $(v_i,v_j)$ is an arc of $D$, then $i<j$
(see, e.g.,~\cite{MR2572804}).
Note that
\begin{enumerate}
\item[($\ast$)] a set $X = \{v_{i_1},\ldots, v_{i_p}\}\subseteq V$ with $i_1<\ldots<i_p$ is independent in $G$ if and only if $(v_{i_1},\ldots, v_{i_p})$ is a directed path in $D$.
\end{enumerate}
\noindent Thus, a partial $k$-coloring in $G$ corresponds to a collection of $k$ vertex-disjoint directed paths in $D$, and vice versa. We process the vertices of $G$ in the ordering given by the topological sort of $D$ and try all possibilities for the color (if any) of the current vertex $v_j$ in order to extend a partial $k$-coloring of the already processed subgraph of $G$ with $v_j$. (In terms of $D$, we choose which of the $k$ directed paths will be extended into $v_j$.) To avoid introducing additional terminology and notation, we present the details of the algorithm in terms of partial $k$-colorings of $G$ instead of systems of disjoint paths in $D$.

For each $j\in \{0,1,\ldots, n\}$ and each $k$-tuple $(i_1,\ldots, i_k)\in \{0,1,\ldots, j\}^k$, we compute the set $P_j(i_1,\ldots, i_k)$ of
all $k$-tuples $(q_1,\ldots, q_k)\in \mathbb{Z}_+^k$ such that there exists a partial $k$-coloring $(X_1,\ldots, X_k)$ of the subgraph of $G$ induced by $\{v_1,\ldots, v_j\}$ (which is empty if $j =0$)
such that $q_\ell = p_\ell(X_\ell)$ and
\begin{equation}\label{eq0}
i_\ell = \left\{
  \begin{array}{ll}
    \max\{r: v_r\in X_\ell \}, & \hbox{if  $X_\ell \neq \emptyset$;} \\
    0, & \hbox{if $X_\ell = \emptyset$}
  \end{array}
\right.
\end{equation}
for all $\ell\in \{1,\ldots, k\}$. Note that for each $\ell\in \{1,\ldots, k\}$, the possible values of the $\ell$-th coordinate of any member of $P_j(i_1,\ldots, i_k)$ belong to the set $\{0,1,\ldots, Q\}$ where $Q = \max_{1\le j\le k}p_j(V)$. Thus, each set $P_j(i_1,\ldots, i_k)$ has at most $(Q+1)^k$ elements. Note also that the total number of sets $P_j(i_1,\ldots, i_k)$ is of the order $\mathcal{O}(n^{k+1})$.

In what follows we explain how to compute the sets $P_j(i_1,\ldots, i_k)$. For $j = 0$, the only feasible choice for the $k$-tuple $(i_1,\ldots, i_k)$ is $(0,\ldots, 0)$ and we set $P_{0}(0,\ldots, 0) = \{0\}^k = \{(0,\ldots, 0)\}$. This is correct since the only partial $k$-coloring of the graph with no vertices is the $k$-tuple $(\emptyset, \ldots, \emptyset)$. Suppose that $j>1$ and that the sets $P_{j-1}(i_1,\ldots, i_k)$ are already computed for all $(i_1,\ldots, i_k)\in \{0,1,\ldots, j-1\}^k$. Fix a $k$-tuple $(i_1,\ldots, i_k)\in \{0,1,\ldots, j\}^k$. To describe how to compute the set $P_{j}(i_1,\ldots, i_k)$, we will use the following notation. We consider three cases. For each of them, we first give a formula for computing the set $P_{j}(i_1,\ldots, i_k)$ and then we argue why the formula is correct.
\begin{enumerate}
  \item If $j$ appears at least twice as a coordinate of $(i_1,\ldots, i_k)$, then we set
    \begin{equation}\label{eq1}P_{j}(i_1,\ldots, i_k) = \emptyset\,.
    \end{equation}
Note that since $j$ appears at least twice as a coordinate of $(i_1,\ldots, i_k)$, there is no partial $k$-coloring $(X_1,\ldots, X_k)$ of the subgraph of $G$ induced by $\{v_1,\ldots, v_j\}$ such that equality~\eqref{eq0} holds for all $\ell\in \{1,\ldots, k\}$. Thus, equation~\eqref{eq1} is correct.
  \item If $j$ does not appear as any coordinate of $(i_1,\ldots, i_k)$, then we set
  \begin{equation}\label{eq2}
  P_{j}(i_1,\ldots, i_k) = P_{j-1}(i_1,\ldots, i_k)\,.
  \end{equation}
  Since $j$ does not appear as any coordinate of $(i_1,\ldots, i_k)$, every partial $k$-coloring of the subgraph of $G$ induced by $\{v_1,\ldots, v_{j-1}\}$ such that equality~\eqref{eq0} holds for all $\ell\in \{1,\ldots, k\}$ is a partial $k$-coloring of the subgraph of $G$ induced by $\{v_1,\ldots, v_{j}\}$ and vice versa. This implies relation~\eqref{eq2}.
\item If $j$ appears exactly once as a coordinate of $(i_1,\ldots, i_k)$, say $i_{s} = j$, then we set
\begin{equation}\label{eq3}
P_{j}(i_1,\ldots, i_k) =\!\!\!\!
\bigcup_{\{j': j' = 0\textrm{ or } \atop v_{j'}\in N_D^-(v_j)\}}\!\!\!\!
\{\mathbf{q}+ \mathbf{e}_{s}(p_{s}(v_j)) \mid \mathbf{q} \in P_{j-1}(i_1,\dots,i_{{s}-1}, j', i_{{s}+1}, \dots, i_k)\}\,,
\end{equation}
where $N_D^-(v_j)$ denotes the set of all vertices $v_{j'}$ such that $(v_{j'},v_j)$ is an arc of $D$. (Note that $j'<j$ for all  $v_{j'}\in N_D^-(v_j)$, since $v_1,\dots, v_n$ is a topological sort of $D$.)

\begin{sloppypar}
Let $\mathbf{q}=(q_1,\ldots, q_k)\in P_{j}(i_1, \dots, i_k)$ and consider a partial $k$-coloring $(X_1,\ldots, X_k)$ of the subgraph of $G$ induced by $\{v_1,\ldots, v_j\}$ such that $p_\ell(X_\ell) = q_\ell$ and equality~\eqref{eq0} holds for all $\ell\in \{1,\ldots, k\}$. Then $\max\{q: v_q\in X_{s}\} = i_{s} = j$. In particular, $v_j\in X_{s}$. Let $X_{s}'= X_{s}\setminus\{v_j\}$ and let
\begin{equation*}
j'= \left\{
  \begin{array}{ll}
    \max\{r: v_r\in X_{s}'\}, & \hbox{if $X_{s}' \neq \emptyset$;} \\
    0, & \hbox{if $X_{s}' = \emptyset$}.
  \end{array}
\right.
\end{equation*}
Note that if $X_{s}'\neq \emptyset$ then $v_{j'}\in N_D^-(v_j)$. Indeed, digraph $D$ is an orientation of the complement of $G$, in which vertices
$v_{j'}$ and $v_j$ are adjacent (recall that they belong to the independent set $X_{s}$ in $G$). This implies that either $(v_j,v_{j'})$ or $(v_{j'},v_{j})$ is an arc of $D$, but since $j'<j$ and $v_1,\ldots, v_n$ is a topological sort of $D$, the pair $(v_{j'},v_{j})$ must be an arc of $D$.
Let $(i_1',\ldots, i_k')$ be the $k$-tuple obtained from $(i_1,\ldots, i_k)$ by replacing $i_{s}$ with $j'$, and let $(X_1',\ldots, X_k')$ be the $k$-tuple obtained from $(X_1,\ldots, X_k)$ by replacing $X_{s}$ with $X_{s}'$. Then $(X_1',\ldots, X_k')$ is a partial $k$-coloring of the subgraph of $G$ induced by $\{v_1,\ldots, v_{j-1}\}$ such that equality obtained from~\eqref{eq0} by replacing $X_\ell$ with $X_\ell'$ and
$i_{\ell}$ with $i_{\ell}'$ holds for each $\ell\in \{1,\ldots, k\}$. Furthermore, $(p_1(X_1),\ldots, p_k(X_k)) = (p_1(X_1'),\ldots, p_k(X_k'))+\mathbf{e}_{s}(p_{s}(v_j))$. This shows that if $\mathbf{q}=(q_1,\ldots, q_k)\in P_{j}(i_1,\ldots, i_k)$, then the $k$-tuple $\mathbf{q}$ belongs to the union $$\bigcup_{\{j': j' = 0\textrm{ or }v_{j'}\in N_D^-(v_j)\}}\{\mathbf{q}+ \mathbf{e}_{s}(p_{s}(v_j)) \mid \mathbf{q} \in P_{j-1}(i_1,\dots,i_{{s}-1}, j', i_{{s}+1}, \dots, i_k)\}\,.$$
\end{sloppypar}

For the converse direction, let $j' \in \{0\}\cup \{1\le j'\le j-1\mid v_{j'}\in N_D^-(v_j)\}$, let $(i_1',\ldots, i_k')$ be the $k$-tuple obtained from $(i_1,\ldots, i_k)$ by replacing $i_{s}$ with $j'$, and let $\mathbf{q}=(q_1,\ldots, q_k)\in P_{j-1}(i_1', \dots, i_k')$. Then, there exists a partial $k$-coloring $(X_1',\ldots, X_k')$ of the subgraph of $G$ induced by $\{v_1,\ldots, v_{j-1}\}$ such that for each $\ell\in \{1,\ldots, k\}$, we have $p_\ell(X_\ell') = q_\ell$ and equality obtained from~\eqref{eq0} by replacing $X_\ell$ with $X_\ell'$ and $i_{\ell}$ with $i_{\ell}'$ holds.
Let $(X_1,\ldots, X_k)$ be the $k$-tuple obtained from $(X_1',\ldots, X_k')$ by replacing $X_{s}'$ with $X_{s}'\cup\{v_j\}$.
To show that $(X_1,\ldots, X_k)$ is a partial $k$-coloring of the subgraph of $G$ induced by $\{v_1,\ldots, v_{j}\}$, it suffices to verify that $X_{s} = X_{s}'\cup\{v_j\}$ is an independent set in $G$.
If $X_{s}' = \emptyset$, then $X_{s} = \{v_j\}$ is independent.
Suppose that $X_{s}' \neq \emptyset$. Then, by ($\ast$), $X_{s}'$ corresponds to a directed path in $D$
ending in $v_{j'}$. Extending this path with vertex $v_j\in N_D^+(v_{j'})$ results in a directed path in $D$ with vertex set $X_s$, which shows, again by ($\ast$), that $X_s$ is independent in $G$.
Clearly, we have that $\max\{r: v_r\in X_{s}\} = j$, and hence $(X_1,\ldots, X_k)$ is a partial $k$-coloring of the subgraph of $G$ induced by $\{v_1,\ldots, v_{j}\}$ equality~\eqref{eq0} holds for each $\ell\in \{1,\ldots, k\}$. Furthermore, $(p_1(X_1),\ldots, p_k(X_k)) = \mathbf{q} + \mathbf{e}_{s}(p_{s}(v_j))$.
This shows that if $\mathbf{q}\in P_{j-1}(i_1', \dots, i_k')$, then the $k$-tuple $\mathbf{q}+\mathbf{e}_{s}(p_{s}(v_j))$ belongs to
$P_{j}(i_1,\ldots, i_k)$. Therefore, equation~\eqref{eq3} is correct.
\end{enumerate}

Finally, the set of all profit profiles of partial $k$-colorings of $G$ equals to the union, over all $(i_1,\ldots, i_k)\in \{0,1,\ldots, n\}^k$, of the sets $P_n(i_1,\ldots, i_k)$.

The algorithm can be easily modified so that for each profit profile also a corresponding partial $k$-coloring is computed. We would just need to store, for each $j\in \{0,1,\ldots, n\}$, each $(i_1,\ldots, i_k)\in \{0,1,\ldots, j\}^k$, and each $k$-tuple $(q_1,\ldots, q_k)\in P_j(i_1,\ldots, i_k)$, one partial $k$-coloring $(X_1,\ldots, X_k)$ of the subgraph of $G$ induced by $\{v_1,\ldots, v_i\}$ such that $p_\ell(X_\ell) = q_\ell$ and equality~\eqref{eq0} holds for all $\ell\in \{1,\ldots, k\}$.

It remains to estimate the time complexity of the algorithm. For each $j\in \{1,\ldots, n\}$ and each of the $\mathcal{O}(n^k)$ $k$-tuples $(i_1,\ldots, i_k)\in \{0,1,\ldots, j\}^k$, we can decide which of the three cases (i)--(iii) occurs in time $\mathcal{O}(k)$. Step~\eqref{eq1} takes constant time, step~\eqref{eq2} takes time $\mathcal{O}((Q+1)^k)$, and  step~\eqref{eq3} can be implemented in time  $\mathcal{O}(n(Q+1)^k)$. Altogether, this results in running time $\mathcal{O}(n(Q+1)^k)$ for each fixed $j\in \{1,\ldots, n\}$ and each $k$-tuple $(i_1,\ldots, i_k)\in \{0,1,\ldots, j\}^k$. Consequently, the total running time of the algorithm is $\mathcal{O}(n^{k+2}(Q+1)^k)$.\qed\end{proof}

Lemma~\ref{lem:cocomp} implies the following.

\begin{theorem}\label{thm:cocomp}
For every $k\ge 1$, \textsc{Fair $k$-Division Under Conflicts} is solvable in time \hbox{$\mathcal{O}(n^{k+2}(Q+1)^k)$} for cocomparability conflict graphs $G$, where $Q = \max_{1\le j\le k}p_j(V(G))$.
\end{theorem}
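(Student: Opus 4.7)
The plan is to derive Theorem~\ref{thm:cocomp} as an essentially immediate corollary of Lemma~\ref{lem:cocomp}. That lemma already delivers, in time $\mathcal{O}(n^{k+2}(Q+1)^k)$, the complete collection $\mathcal{P}$ of profit profiles realizable by partial $k$-colorings of the cocomparability conflict graph $G$; moreover, as observed in the last paragraph of its proof, the dynamic program can be instrumented (without affecting the asymptotic running time) so that alongside each stored profile it remembers a witnessing partial $k$-coloring obtained by the corresponding sequence of DP transitions.

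Given $\mathcal{P}$, the optimal value of \textsc{Fair $k$-Division Under Conflicts} is, by definition, $\max_{(q_1,\ldots,q_k)\in\mathcal{P}}\min_{1\le j\le k}q_j$. First, I would run the algorithm of Lemma~\ref{lem:cocomp} on $(G,p_1,\ldots,p_k)$. Next, I would make a single linear pass over $\mathcal{P}$: for each profile $(q_1,\ldots,q_k)$ compute its minimum coordinate in $\mathcal{O}(k)$ time, track the maximum such minimum encountered, and at the end return the profile achieving that maximum together with the partial $k$-coloring it has stored as a witness. Correctness is immediate from the definition of satisfaction level together with the completeness of $\mathcal{P}$ guaranteed by Lemma~\ref{lem:cocomp}.

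For the running-time bound, the post-processing scan costs at most $\mathcal{O}(k\,|\mathcal{P}|)$, and since each of the $\mathcal{O}(n^{k+1})$ sets $P_n(i_1,\ldots,i_k)$ has at most $(Q+1)^k$ elements, this is absorbed into the $\mathcal{O}(n^{k+2}(Q+1)^k)$ cost of constructing $\mathcal{P}$ in the first place. Hence the overall complexity remains $\mathcal{O}(n^{k+2}(Q+1)^k)$, as claimed. There is no genuine obstacle here: the substantive work, namely exploiting a transitive orientation $D$ of $\overline{G}$, translating partial $k$-colorings into systems of $k$ vertex-disjoint directed paths in $D$, and implementing the dynamic programming over topologically sorted prefixes together with the $k$-tuple of path endpoints, was already concentrated in Lemma~\ref{lem:cocomp}, so the theorem follows by the straightforward reduction above.
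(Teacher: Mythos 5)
Your proposal is correct and follows exactly the paper's own argument: invoke Lemma~\ref{lem:cocomp} to obtain all profit profiles, then scan them to maximize the minimum coordinate, noting that the scan is absorbed into the stated running time. Nothing further is needed.
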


\begin{proof}
By Lemma~\ref{lem:cocomp}, we can compute the set $\Pi$ of all profit profiles of partial $k$-colorings of $G$ in the stated running time. For each profit profile in $\Pi$, we can determine the satisfaction level of the corresponding partial $k$-coloring of $G$. Taking the maximum satisfaction level over all profiles gives the optimal value of \textsc{Fair $k$-Division Under Conflicts} for ($G,p_1,\ldots, p_k)$.
\qed\end{proof}

\subsection{Biconvex bipartite graphs}
\label{sec:biconvex}

Recall from Theorem~\ref{thm:bipartite} that \textsc{Fair $k$-Division Under Conflicts} is strongly $\textrm{NP}$-hard for bipartite conflict graphs.
Thus, we consider in the following the more restricted case of \emph{biconvex} bipartite conflict graphs.
Recall that a bipartite graph $G=(A\cup B, E)$ is biconvex if it has a \emph{biconvex ordering}, that is, an ordering of $A$ and $B$ such that for every vertex $a \in A$ (resp.\ $b \in B$) the neighborhood $N(a)$ (resp.\ $N(b)$) is an interval of consecutive vertices
in the ordering of $B$ (resp.\ ordering of $A$).

It is known that a connected biconvex bipartite graph $G$ can always be ordered in such a way that the first and last vertices on one side have a special structure. Fix a biconvex ordering of $G$, say $A=(a_1, \ldots, a_s)$ and $B=(b_1,\ldots, b_t)$. Define $a_L$ (resp.\ $a_R$) as the vertex in $N(b_1)$ (resp.\ $N(b_t)$) whose neighborhood is not properly contained in any other neighborhood set (see \cite[Def.~8]{abbas2000biconvex}). In case of ties, $a_L$ is the smallest such index (and $a_R$ the largest). We always assume that $a_L \leq a_R$, otherwise the ordering in $A$ could be mirrored. Under these assumptions, the neighborhoods of vertices appearing in the ordering before $a_L$ and after $a_R$ are nested.

\begin{lemm}[Abbas and Stewart~\cite{abbas2000biconvex}]\label{th:biconvexstructure}
Let $G=(A\cup B, E)$ be a connected biconvex graph.
Then there exists a biconvex ordering of the vertices of $G$ such that:
\begin{enumerate}
\item  For all $a_i$, $a_j$ with $a_1 \leq a_i< a_j \leq a_L$ we have $N(a_i) \subseteq N(a_j)$.
\item For all $a_i$, $a_j$ with $a_R \leq a_i < a_j \leq a_s$  we have $N(a_j) \subseteq N(a_i)$.
\item The subgraph $G'$ of $G$ induced by vertex set $\{a_L, \ldots, a_R\} \cup B$ is a bipartite permutation graph.
\end{enumerate}
\end{lemm}

\begin{sloppypar}
Property (iii) can be put in context with Theorem~\ref{thm:cocomp}.
{Indeed, it is known that every permutation graph is a cocomparability graph} (see, e.g.,~\cite{MR1686154}).
This gives rise to the following result that \textsc{Fair $k$-Division Under Conflicts} on biconvex bipartite graphs is indeed easier (from the  complexity point of view) than on general bipartite graphs.
The high-level idea of the algorithm is illustrated in Algorithm~\ref{alg:biconvex}.
\end{sloppypar}

\begin{algorithm}\label{alg:biconvex}
	\caption{Algorithmic Idea for a Connected Biconvex Graph $G$}
	\begin{algorithmic}
	\STATE \textit{apply} Lemma~\ref{th:biconvexstructure} for getting the cocomparability graph $G'$  and vertices $a_L$, $a_R$
		\STATE let $A_L:=\{a_1,\ldots, a_{L-1}\}$ and $A_R:=\{a_{R+1}, \ldots, a_s\}$
		
		\FORALL{$j\in \{1, \ldots, k\}$}
		\STATE \textit{guess} $\overline{a}_j \in A_L$ with largest index (resp.\ smallest index $\underline{a}_j \in A_R$) included in $X_j$
		\ENDFOR
		\STATE \textit{each such guess} can be represented by a $2k$-tuple $\sigma = (\overline{a}_1, \ldots, \overline{a}_k, \underline{a}_1, \ldots, \underline{a}_k)$
		\FOR{each guess $\sigma$}
		\FORALL{$j\in \{1, \ldots, k\}$}
		\STATE \textit{exclude} all vertices $v$ of the neighborhood $N(\overline{a}_j) \subseteq B$ (and $N(\underline{a}_j) \subseteq B$)\\ from insertion into $X_j$ by setting their profit $p_j(v) := 0$
		\ENDFOR
		
		\STATE \textit{apply} Lemma~\ref{lem:cocomp} to the cocomparability graph $G'$ and the modified profit functions to obtain the set $\Pi_\sigma$ of all profit profiles $(q_1,\ldots, q_k)$ of partial $k$-colorings of $G'$ with respect to the modified profits
		
		\smallskip
		
		\STATE \textit{increase} each profit profile
		by setting $q_j := q_j + p_j(\overline{a}_j)+p_j(\underline{a}_j)$
		
		\smallskip
		
		\STATE \textit{augment} these profiles with vertices from $A_L$ and $A_R$
		
		\ENDFOR
		
		\STATE \textit{choose} the best solution over all guesses $\sigma$
		
	\end{algorithmic}
\end{algorithm}

\begin{sloppypar}
\begin{theorem}\label{thm:biconvex}
For every $k\ge 1$, \textsc{Fair $k$-Division Under Conflicts} is solvable in time \hbox{$\mathcal{O}(n^{3k+2}(Q+1)^k)$} for connected biconvex bipartite conflict graphs $G$, where $Q = \max_{1\le j\le k}p_j(V(G))$.
\end{theorem}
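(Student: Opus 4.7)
My plan is to reduce the problem on $G$ to a family of instances on the cocomparability subgraph $G'$ of Lemma~\ref{th:biconvexstructure}, via a guess-and-augment strategy that handles the peripheral vertex sets $A_L$ and $A_R$ separately. Since $G$ is assumed connected, I would first compute (in polynomial time) a biconvex ordering with the properties of Lemma~\ref{th:biconvexstructure}, identifying $A_L=\{a_1,\ldots,a_{L-1}\}$, $A_R=\{a_{R+1},\ldots,a_s\}$, and the bipartite permutation graph $G'=G[\{a_L,\ldots,a_R\}\cup B]$, which is a cocomparability graph. Properties (i) and (ii) of that lemma guarantee that neighborhoods in $A_L$ are weakly nested in increasing index order and those in $A_R$ are weakly nested in decreasing index order.

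I would then use this nesting to compress the portion of a coloring $(X_1,\ldots,X_k)$ lying outside $G'$. For each color $j$, let $\bar a_j$ denote the largest-index vertex of $X_j\cap A_L$ (or $\bot$ if empty) and $\underline a_j$ the smallest-index vertex of $X_j\cap A_R$ (or $\bot$). By (i)--(ii), every $a_i\in X_j\cap A_L$ satisfies $N(a_i)\subseteq N(\bar a_j)$, and similarly for $A_R$. Hence the independence of $X_j$ in $G$ is equivalent to: $X_j\cap V(G')$ is independent in $G'$ and disjoint from $N(\bar a_j)\cup N(\underline a_j)$; plus the index constraints $X_j\cap A_L\subseteq\{a_i:i\le\mathrm{ind}(\bar a_j)\}$ and $X_j\cap A_R\subseteq\{a_i:i\ge\mathrm{ind}(\underline a_j)\}$, where these sets are empty when $\bar a_j$ or $\underline a_j$ equals $\bot$. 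I would enumerate all $\mathcal{O}(n^{2k})$ guesses $\sigma=(\bar a_1,\ldots,\bar a_k,\underline a_1,\ldots,\underline a_k)\in(A_L\cup\{\bot\})^k\times(A_R\cup\{\bot\})^k$. For each $\sigma$: (a) set $\tilde p_j(v)=0$ for all $v\in N(\bar a_j)\cup N(\underline a_j)$, leaving other profits unchanged; (b) apply Lemma~\ref{lem:cocomp} to $(G',\tilde p_1,\ldots,\tilde p_k)$ to obtain the set of all profit profiles in time $\mathcal{O}(n^{k+2}(Q+1)^k)$; (c) shift each profile by adding $p_j(\bar a_j)+p_j(\underline a_j)$ to the $j$-th coordinate (with $p_j(\bot):=0$); (d) augment by processing the remaining vertices of $A_L\cup A_R$ sequentially, offering each such $a_i$ the options to remain unassigned or to join some $X_j$ satisfying its index constraint, yielding a $k$-dimensional DP over $\mathcal{O}(n)$ vertices with $\mathcal{O}((Q+1)^k)$ profiles, at cost $\mathcal{O}(nk(Q+1)^k)$ per guess. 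By the nesting, the additions in step~(d) preserve independence of the relevant $X_j$.

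The crux of the proof is the correctness of the zero-profit modification in step (a), namely that the set of profiles produced by Lemma~\ref{lem:cocomp} in step~(b) coincides with the set of $p$-profiles of partial $k$-colorings of $G'$ satisfying the avoidance constraint $X_j\cap (N(\bar a_j)\cup N(\underline a_j))=\emptyset$ for all $j$. This holds by the following swap argument: for any partial $k$-coloring $(Y_1,\ldots,Y_k)$ of $G'$, setting $Y_j':=Y_j\setminus(N(\bar a_j)\cup N(\underline a_j))$ yields a partial $k$-coloring with the same $\tilde p$-profile (only zero-profit vertices are removed) and satisfying the avoidance constraint; conversely, any constrained coloring satisfies $\tilde p_j(Y_j)=p_j(Y_j)$. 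Taking the maximum of $\min_j q_j$ over all augmented profiles across all guesses then yields the optimum. Multiplying the $\mathcal{O}(n^{2k})$ guesses by the per-guess bound $\mathcal{O}(n^{k+2}(Q+1)^k)$ (which dominates step (d)) gives the claimed $\mathcal{O}(n^{3k+2}(Q+1)^k)$ total running time.
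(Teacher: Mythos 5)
Your proposal is correct and follows essentially the same route as the paper: guess the extremal $A_L$/$A_R$ vertices per color class, zero out the profits of their neighborhoods, run the cocomparability algorithm of Lemma~\ref{lem:cocomp} on $G'$, then augment with the remaining nested vertices of $A_L\cup A_R$. Your explicit swap argument justifying the zero-profit modification is a point the paper treats only informally, but the decomposition, the $\mathcal{O}(n^{2k})$ enumeration, and the running-time accounting all coincide with the paper's proof.
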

\end{sloppypar}

\begin{proof}
{At first Lemma~\ref{th:biconvexstructure} is applied}
for obtaining from $G$ the cocomparability graph $G'$. However, we have to consider also the vertex sets $A_L:=\{a_1,\ldots, a_{L-1}\}$
and $A_R:=\{a_{R+1}, \ldots, a_s\}$.
This is done by considering assignments of vertices in $A_L\cup A_R$ to the $k$ subsets of a partial $k$-coloring of $G$ in an efficient way as follows.

For every $j\in \{1, \ldots, k\}$, we guess, by going through all possibilities, the largest index vertex $\overline{a}_j \in A_L$ (resp.\ smallest index $\underline{a}_j \in A_R$) inserted in $X_j$.  One can add an artificial vertex $a_0$ (resp.\ $a_{s+1}$) to represent the case that no vertex from $A_L$ (resp.\ $A_R$) is inserted in $X_j$. Thus, every guess is represented by a $2k$-tuple $\sigma = (\overline{a}_1, \ldots, \overline{a}_k, \underline{a}_1, \ldots, \underline{a}_k)$.
The total number of such guesses (i.e.,\ iterations) is bounded by $(n+1)^k$ for each of $A_L$ and $A_R$, i.e., $\mathcal{O}(n^{2k})$ selections to be considered in total.

For each such guess $\sigma$ we perform the following computations. For every  $j\in \{1, \ldots, k\}$ the vertices in the neighborhood $N(\overline{a}_j) \subseteq B$ (and $N(\underline{a}_j) \subseteq B$) of the chosen index must be excluded from insertion into the corresponding set $X_j$. This can be easily realized by setting to $0$ the profits $p_j$ of all vertices in $N(\overline{a}_j)$ (resp.\ $N(\underline{a}_j)$).
With these slight modifications of the profits we can apply Lemma~\ref{lem:cocomp} for the cocomparability graph $G'$ and the modified profit functions $p_j^\sigma$ to obtain the set $\Pi_\sigma$ of all (pseudo-polynomially many) profit profiles $(q_1,\ldots, q_k)$ of partial $k$-colorings
of $G'$ with respect to $p^\sigma$. Every entry $q_j$ of a profit profile in $\Pi_\sigma$ is increased by $p_j(\overline{a}_j)+p_j(\underline{a}_j)$, to account for inclusion of the vertices selected by the guess $\sigma$.

In every guess there are the two vertices $\overline{a}_j$ and $\underline{a}_j$ permanently assigned to $X_j$ for every $j$ and their neighborhoods  $N(\overline{a}_j)$ and  $N(\underline{a}_j)$ are excluded from $X_j$.
Now it follows from properties (i) and (ii) of Lemma~\ref{th:biconvexstructure} that for each vertex $a' \in A_L$ with $a' < \overline{a}_j$ (resp.\ $a' \in A_R$ with $a'>\underline{a}_j$) the neighborhood $N(a')$ is a subset of $N(\overline{a}_j)$ (resp.\ $N(\underline{a}_j)$).
Thus, these vertices $a'$ could also be inserted in $X_j$ without any violation of the conflict structure.
Therefore, we can start from the set $\Pi_\sigma$ of profit profiles computed for $(G',p^\sigma)$ and consider iteratively (in arbitrary order) the addition of a vertex $a' \in A_L$ to one of the color classes $X_j$, as it is usually done in dynamic programming.
Each $a'$ is considered as an addition to every profit profile $(q_1,\ldots, q_{k}) \in \Pi_\sigma$ and for every index $j$ with $a' < \overline{a}_j$
yielding new profit profiles
$(q_1,\ldots, q_{j-1}, q_j+p_j(a'), q_{j+1}, \ldots, q_k)$
to be added to $\Pi_\sigma$.
An analogous procedure is performed for all vertices $a'\in A_R$
where the addition is restricted to indices $j$ with $a' > \underline{a}_j$.

For every guess $\sigma$, the running time is dominated by the effort of computing the $\mathcal{O}((Q+1)^k)$ profit profiles of $(G',p^\sigma)$ according to Lemma~\ref{lem:cocomp}, since adding any of the $\mathcal{O}(n)$ vertices $a'$ requires only $k$ operations for each profit profile.

In this way, we construct the set $\Pi_\sigma$ of all profit profiles of partial $k$-colorings of $G$ for each guess $\sigma$.
It remains to identify the optimal solution in the set $\Pi := \bigcup_\sigma\Pi_\sigma$
similarly as in the proof of Theorem~\ref{thm:cocomp}.
Going over all $\mathcal{O}(n^{2k})$ guesses $\sigma$, the total running time can be given from Lemma~\ref{lem:cocomp} as $\mathcal{O}(n^{3k+2}(Q+1)^k)$.
\qed\end{proof}

For disconnected conflict graphs, we can easily paste together the profit profiles of all connected components.
Note that this construction applies for general graphs.

\begin{lemm}
\label{thm:disconnect}
Given a conflict graph $G$ consisting of $c>1$ connected components $G_\ell$, $\ell=1, \ldots, c$, each of them with a set of profit profiles $\Pi_\ell$, where the size of each $\Pi_\ell$ is of order
$\mathcal{O}((Q+1)^k)$
with $Q = \max_{1\le j\le k}p_j(V(G))$,
\textsc{Fair $k$-Division Under Conflicts} can be solved for $G$ in time $\mathcal{O}((c-1)(Q+1)^{2k})$.
\end{lemm}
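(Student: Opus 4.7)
The plan is to compute the set $\Pi$ of all profit profiles of partial $k$-colorings of $G$ by iteratively merging the sets $\Pi_\ell$ one component at a time, and then to scan $\Pi$ to extract the optimum satisfaction level.

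First I would record the key structural observation: since distinct components of $G$ are vertex-disjoint and share no edges, any partial $k$-coloring $(X_1,\ldots,X_k)$ of $G$ decomposes uniquely into partial $k$-colorings of the individual $G_\ell$'s by setting $X_j^{(\ell)} := X_j\cap V(G_\ell)$, and conversely any family of component-wise partial $k$-colorings reassembles into a partial $k$-coloring of $G$ via union. Since profits are additive, $p_j(X_j) = \sum_{\ell=1}^c p_j(X_j^{(\ell)})$, so $\Pi$ is precisely the componentwise Minkowski sum $\Pi_1 + \Pi_2 + \cdots + \Pi_c$.

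Next I would set $\Pi^{(1)} := \Pi_1$ and, for $\ell = 2,\ldots, c$, define $\Pi^{(\ell)}$ as the set of all componentwise sums $\mathbf{q}+\mathbf{q}'$ with $\mathbf{q}\in \Pi^{(\ell-1)}$ and $\mathbf{q}'\in \Pi_\ell$. The crucial bound is that every coordinate of every profile in $\Pi^{(\ell)}$ lies in $\{0,1,\ldots,Q\}$ (as it equals $p_j$ evaluated on some independent subset of $V(G)$), so $|\Pi^{(\ell)}|\le (Q+1)^k$ at every step. Using a hash table or a direct-address array indexed by $\{0,\ldots,Q\}^k$ to remove duplicates on the fly, each merge enumerates at most $(Q+1)^{k}\cdot (Q+1)^{k} = (Q+1)^{2k}$ candidate sums in time $\mathcal{O}((Q+1)^{2k})$, giving a total merging cost of $\mathcal{O}((c-1)(Q+1)^{2k})$.

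Finally, I would scan the resulting set $\Pi^{(c)} = \Pi$ and return the profile maximizing $\min_{1\le j\le k} q_j$; this costs $\mathcal{O}(k(Q+1)^k)$ and is dominated by the merging step. The only real subtlety, and the one place to be careful, is enforcing the deduplication invariant $|\Pi^{(\ell)}|\le (Q+1)^k$ at every intermediate step: without it the merged sets could in principle grow multiplicatively to $(Q+1)^{kc}$, completely wrecking the claimed bound. With that invariant maintained, the running time $\mathcal{O}((c-1)(Q+1)^{2k})$ follows immediately from the per-merge estimate.
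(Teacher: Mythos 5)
Your proposal is correct and follows essentially the same route as the paper: iteratively merge the component profile sets by pairwise vector addition, using the trivial bound $|\Pi|\le (Q+1)^k$ after deduplication at each step to keep every merge within $\mathcal{O}((Q+1)^{2k})$, and finally scan for the best satisfaction level. Your explicit justification that $\Pi$ is the Minkowski sum of the $\Pi_\ell$ (via the componentwise decomposition of partial $k$-colorings) is a detail the paper leaves implicit, but the argument and the bookkeeping are the same.
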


\begin{proof}
We maintain a set of profit profiles $\Pi$, initialized by $\Pi:=\Pi_1$, and iteratively merge each of the profit profiles $\Pi_2, \ldots, \Pi_m$ with $\Pi$.
To merge a set of profit profiles $\Pi_\ell$,
we consider every pair of profiles from $\Pi$ and $\Pi_\ell$
and perform a vector addition to obtain a (possibly) new profit profile which is added to $\Pi$.
At most $(Q+1)^{2k}$ such pairs may exist.
In each of the $c-1$ iterations the number of different profit profiles in $\Pi$ remains bounded by the trivial upper bound $(Q+1)^k$.
Finally, the best objective function value is determined by evaluating all profit profiles.
The total running time of this procedure is of order
$\mathcal{O}((c-1)(Q+1)^{2k})$.
\qed\end{proof}

Running Algorithm~\ref{alg:biconvex} for all $c$ components of a graph with $n$ vertices can be done in time $\mathcal{O}(n^{3k+2}(Q+1)^k)$.
Applying Lemma~\ref{thm:disconnect} on the resulting profit profiles, we obtain the following corollary.
Note that the computational complexity does not depend on the size of the components.

\begin{sloppypar}
\begin{coro}
\label{thm:biconvexdisc}
For every $k\ge 1$, \textsc{Fair $k$-Division Under Conflicts} is solvable in time \hbox{$\mathcal{O}(n^{3k+2}(Q+1)^k+(c-1)(Q+1)^{2k})$} for biconvex bipartite conflict graphs $G$ consisting of $c$ connected components,
where $Q = \max_{1\le j\le k}p_j(V(G))$.
\end{coro}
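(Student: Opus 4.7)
The plan is to reduce the disconnected case directly to the connected case handled by Theorem~\ref{thm:biconvex} and then glue the per-component results together using Lemma~\ref{thm:disconnect}. Let $G_1, \ldots, G_c$ denote the connected components of $G$, with $n_\ell := |V(G_\ell)|$, so that $\sum_{\ell = 1}^c n_\ell = n$. Each $G_\ell$ is itself a connected biconvex bipartite graph (any biconvex ordering of $G$ restricts to a biconvex ordering of $G_\ell$), so Theorem~\ref{thm:biconvex} applies and produces, for every $\ell$, the complete set $\Pi_\ell$ of profit profiles of partial $k$-colorings of $G_\ell$ in time $\mathcal{O}(n_\ell^{3k+2}(Q_\ell+1)^k)$, where $Q_\ell = \max_{1\le j\le k} p_j(V(G_\ell)) \le Q$.

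First I would bound the aggregate cost of Step~1. Since $3k+2 \ge 1$ and $n_\ell \le n$, we have $\sum_{\ell} n_\ell^{3k+2} \le n^{3k+1} \sum_\ell n_\ell = n^{3k+2}$, so the total time spent on all components is $\mathcal{O}(n^{3k+2}(Q+1)^k)$. Note also that $|\Pi_\ell| = \mathcal{O}((Q+1)^k)$ by the trivial bound on the number of possible profit profiles, so the size hypothesis of Lemma~\ref{thm:disconnect} is met.

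Next I would invoke Lemma~\ref{thm:disconnect} on $\Pi_1, \ldots, \Pi_c$: since any partial $k$-coloring of $G$ decomposes uniquely as a disjoint union of partial $k$-colorings of the components, the set of all profit profiles of partial $k$-colorings of $G$ is exactly the iterated coordinate-wise sum of the $\Pi_\ell$'s. Lemma~\ref{thm:disconnect} computes this merged set and returns the best satisfaction level in time $\mathcal{O}((c-1)(Q+1)^{2k})$.

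Adding the two bounds yields the claimed running time $\mathcal{O}(n^{3k+2}(Q+1)^k + (c-1)(Q+1)^{2k})$. There is no genuine obstacle here; the only point that deserves a sentence of care is the observation that $\sum_\ell n_\ell^{3k+2} \le n^{3k+2}$, which is what keeps the per-component cost from blowing up, and the fact that $Q_\ell \le Q$ so that a common upper bound $Q$ can be used for all components in both steps.
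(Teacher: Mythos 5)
Your proposal is correct and follows essentially the same route as the paper: run the connected-case algorithm (Theorem~\ref{thm:biconvex} / Algorithm~\ref{alg:biconvex}) on each component to obtain its set of profit profiles, then merge via Lemma~\ref{thm:disconnect}. Your explicit bound $\sum_\ell n_\ell^{3k+2} \le n^{3k+2}$ is a welcome detail that the paper only states implicitly (``the computational complexity does not depend on the size of the components''), but it is the same argument.
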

\end{sloppypar}

Note that the increased running time factor of $(Q+1)^{2k}$ cannot be easily avoided.
In particular, the natural idea of connecting the biconvex components by inserting dummy vertices to obtain a single connected biconvex graph does not work.
{This is shown in  Appendix~\ref{sec:appendix}.}

\subsection{Chordal graphs}

In this section we present a pseudo-polynomial time algorithm that solves the \textsc{Fair $k$-Division Under Conflicts} on chordal graphs. Recall that a graph is \emph{chordal} if all its induced cycles are of length three. First we state some known results on chordal graphs and their tree decompositions.

A \emph{tree decomposition} of a graph $G$ is a pair
${\cal T} = (T, \{\Bag_t\}_{t\in V(T)})$ where $T$ is a tree
whose every node $t$ is assigned a vertex
subset $\Bag_t\subseteq V(G)$ called a bag such that the following conditions are satisfied:
{\begin{itemize}
    \item Every vertex of $G$ is in at least one bag.
    \item For every edge $\{u,v\}\in E(G)$ there exists a node $t\in V(T)$ such that $\Bag_t$ contains both $u$ and $v$.
    \item For every vertex $u\in V(G)$ the subgraph of $T$ induced by the set $\{t\in V(T)\,:\,u\in \Bag_t\}$ is connected (that is, a tree).
\end{itemize}}
A tree decomposition $(T, \{\Bag_t\}_{t\in V(T)})$ is \emph{rooted} if we distinguish one vertex $r$ of $T$ which will be the root of $T$. This introduces natural parent-child and ancestor-descendant relations in the tree $T$. Following~\cite{MR3380745}, we will say
that a tree decomposition $(T, \{\Bag_t\}_{t\in V(T)})$ is \emph{nice} if it is rooted and the following
conditions are satisfied:
\begin{itemize}
\item If $t\in V(T)$ is the root or a leaf of $T$, then $\Bag_t = \emptyset$;
\item Every non-leaf node $t$ of $T$ is one of the following three types:
  \begin{itemize}
\item {\bf Introduce node:} a node $t$ with exactly one child $t'$ such that $\Bag_t = \Bag_{t'}\cup\{v\}$ for some vertex $v\in V(G)\setminus \Bag_{t'}$;
\item {\bf Forget node:} a node $t$ with exactly one child $t'$ such that $\Bag_t = \Bag_{t'}\setminus\{v\}$ for some vertex $v\in \Bag_{t'}$;
\item {\bf Join node:} a node $t$ with exactly two children $t_1$ and $t_2$ such that $\Bag_t = \Bag_{t_1} = \Bag_{t_2}$.
\end{itemize}
\end{itemize}

The \emph{width} of a tree decomposition  $(T, \{\Bag_t\}_{t\in V(T)})$ of a graph $G$ is defined as $\max_{t\in V(T)}|\Bag_t|-1$. Lemma 7.4 from~\cite{MR3380745} shows that every tree decomposition of width at most $\ell$ can be transformed in polynomial time into a nice tree decomposition of width at most $\ell$. The proof actually shows the following statement, which will be useful for our purpose.

\begin{lemm}\label{lem:nice}
Given a tree decomposition $\mathcal{T} = (T, \{\Bag_t\}_{t\in V(T)})$ of an $n$-vertex graph $G$, one can in time $\mathcal{O}(n^2\cdot\max\{n,|V(T)|\})$ compute a nice tree decomposition $\mathcal{T}'$ of $G$ that has at most $\mathcal{O}(n^2)$ nodes and such that every bag of $\mathcal{T}'$ is a subset of a bag of $\mathcal{T}$.
\end{lemm}

Let us now apply these concepts to chordal graphs. A \emph{clique tree} of a graph $G$ is a tree decomposition $(T,\{\Bag_t\}_{t\in V(T)})$ such that the bags are exactly the maximal cliques of $G$.
It is well known
(see,~e.g., \cite{MR1320296})
that a graph is chordal if and only if it has a clique tree, and in such a case a clique tree can be constructed in linear time (see, e.g.,~\cite{MR1971502}).
Furthermore, every chordal graph
$G$ has at most $|V(G)|$ maximal cliques (see, e.g.,~\cite{MR1320296}).

\begin{lemm}
\label{lem:chordal}
Given an $n$-vertex chordal graph $G$, we can compute in linear time
a tree decomposition $(T,\{\Bag_t\}_{t\in V(T)})$ of $G$ with
$\mathcal{O}(n)$ bags, all of which are cliques.
\end{lemm}

Combining Lemmas~\ref{lem:nice} and~\ref{lem:chordal} yields the following.

\begin{lemm}\label{lem:nice-chordal}
Given an $n$-vertex chordal graph $G$, we can compute in time
$\mathcal{O}(n^3)$ a nice tree decomposition $(T, \{\Bag_t\}_{t\in V(T)})$ of $G$ with $\mathcal{O}(n^2)$ bags, all of which are cliques.
\end{lemm}

We will also need the following technical lemma about tree decompositions (see, e.g.,~\cite{MR3380745}).

\begin{lemm}\label{lem:tree-dec}
Let $(T,\{\Bag_t\}_{t\in V(T)})$ be a tree decomposition of a graph $G$ and
let $\{a,b\}$ be an edge of $T$. The forest $T -\{a,b\}$ obtained from $T$ by deleting edge $\{a,b\}$ consists of two connected components $T_{a}$ (containing $a$) and $T_b$ (containing $b$). Let $A =\left(\bigcup_{t\in V(T_a)} \Bag_t\right) \setminus (\Bag_a\cap \Bag_b)$ and $B=\left(\bigcup_{t\in V(T_b)} X_t\right) \setminus (\Bag_a\cap \Bag_b)$. Then no vertex in $A$ is adjacent to a vertex in $B$.
\end{lemm}

Before we proceed to the main result for chordal graphs, we need to introduce an auxiliary definition.
Let $G = (V,E)$ be a graph, let $U\subseteq V$, let $c = (X_1,\ldots, X_k)$ be a partial $k$-coloring
of $G[X]$, and let $c' = (Y_1,\ldots, Y_k)$ be a partial $k$-coloring of $G$. We say that $c'$ \emph{agrees with $c$ on $U$} if $X_j\cap U=Y_j$ for all $j\in\{1,\dots, k\}$.

\begin{theorem}\label{thm:chordal}
\sloppypar{
For every $k\ge 1$, \textsc{Fair $k$-Division Under Conflicts} is solvable in time \hbox{$\mathcal{O}(n^{k+2}(Q+1)^{2k})$} for a chordal conflict graph $G$, where $Q = \max_{1\le j\le k}p_j(V(G))$.
}
\end{theorem}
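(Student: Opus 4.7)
The plan is to design a dynamic programming algorithm over a nice tree decomposition in which every bag is a clique. By Lemma~\ref{lem:nice-chordal}, such a decomposition $(T, \{\Bag_t\}_{t\in V(T)})$ with $\mathcal{O}(n^2)$ bags can be computed in time $\mathcal{O}(n^3)$. The crucial structural consequence of the bags being cliques is that in any partial $k$-coloring of $G$, the restriction to $\Bag_t$ assigns distinct vertices to distinct colors, so the number of possible ``bag colorings'' (injective partial maps $\sigma:\Bag_t\to\{1,\ldots,k\}$, with $\sigma(v)=0$ meaning $v$ is uncolored) is at most $\sum_{j=0}^{\min(|\Bag_t|,k)}\binom{|\Bag_t|}{j}k(k-1)\cdots(k-j+1) = \mathcal{O}(n^k)$ for fixed $k$.

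For each node $t$ of $T$, let $V_t$ denote the union of bags in the subtree rooted at $t$. For each bag coloring $\sigma$ of $\Bag_t$ and each profit profile $(q_1,\ldots,q_k)\in\{0,\ldots,Q\}^k$, the DP table entry $M_t[\sigma,(q_1,\ldots,q_k)]$ records whether there exists a partial $k$-coloring $(Y_1,\ldots,Y_k)$ of $G[V_t]$ that agrees with $\sigma$ on $\Bag_t$ and has profit vector exactly $(p_1(Y_1),\ldots,p_k(Y_k)) = (q_1,\ldots,q_k)$. I would fill the table bottom-up: at a leaf, only $\sigma = \emptyset$ and $(0,\ldots,0)$ are feasible; at an introduce node adding $v$, one either leaves $v$ uncolored (reusing the child entry) or assigns $v$ to some color $j$ (incrementing $q_j$ by $p_j(v)$, provided no other vertex of $\Bag_{t'}$ already has color $j$); at a forget node removing $v$, one takes the OR over all consistent extensions of $\sigma$ at the child that color or uncolor $v$; at a join node with children $t_1,t_2$, the coloring $\sigma$ of the shared bag must be the same in both children, and one combines entries by setting $q_j = q_j^{(1)}+q_j^{(2)} - p_j(\sigma^{-1}(j))$ for each color $j$ (subtracting the profit of the bag vertex assigned to color $j$, if any, to avoid double-counting it via Lemma~\ref{lem:tree-dec}). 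At the root $r$ we have $\Bag_r=\emptyset$, and the optimum satisfaction level is $\max\{\min_j q_j : M_r[\emptyset,(q_1,\ldots,q_k)]=\mathrm{true}\}$; a maximizing partial coloring can be reconstructed by storing back-pointers.

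The correctness follows by induction on the depth of $t$, using Lemma~\ref{lem:tree-dec} to guarantee that vertices from the $V_{t_1}\setminus\Bag_t$ and $V_{t_2}\setminus\Bag_t$ sides of a join have no edges between them, so gluing two valid partial colorings that agree on $\Bag_t$ produces a valid partial coloring of $G[V_t]$. For the running time: there are $\mathcal{O}(n^2)$ nodes, $\mathcal{O}(n^k)$ bag colorings per node, and $\mathcal{O}((Q+1)^k)$ profit profiles per bag coloring. Introduce and forget updates cost $\mathcal{O}(k)$ per entry, but join nodes are the expensive step, requiring a convolution over pairs of profit profiles at cost $\mathcal{O}((Q+1)^{2k})$ per bag coloring. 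Summing gives the claimed total running time of $\mathcal{O}(n^{k+2}(Q+1)^{2k})$.

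The main obstacle, and the place where the argument has to be set up carefully, is the join node: one must (i) correctly subtract the contribution of bag vertices that are shared between the two subtrees so that each vertex is counted exactly once in the resulting profit profile, and (ii) verify that the DP transition can indeed be implemented as a convolution in $\mathcal{O}((Q+1)^{2k})$ time per bag coloring, which drives the $(Q+1)^{2k}$ factor in the final bound. The clique structure of the bags is essential throughout, since it both keeps the number of bag colorings polynomial in $n$ and guarantees that any two vertices in the same bag automatically forbid sharing a color.\qed
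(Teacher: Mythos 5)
Your proposal matches the paper's proof essentially step for step: the same nice clique-bag tree decomposition from Lemma~\ref{lem:nice-chordal}, the same DP state (an injective bag coloring together with an achievable profit profile, i.e., the paper's sets $P(t,c)$), the same four node-type transitions including the subtraction of the shared bag vertices' profits at join nodes justified via Lemma~\ref{lem:tree-dec}, and the same accounting of $\mathcal{O}(n^2)$ nodes, $\mathcal{O}(n^k)$ bag colorings, and $\mathcal{O}((Q+1)^{2k})$ join cost yielding $\mathcal{O}(n^{k+2}(Q+1)^{2k})$. The only cosmetic difference is that you store a boolean table indexed by profiles rather than a set of achievable profiles.
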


\begin{proof}
Fix $k\ge 1$ and let $G$ be a chordal graph equipped with profit functions $p_1,\ldots, p_k:V(G)\to \mathbb{Z}_+$. We will show that we can compute the set $\Pi$ of all profit profiles of partial $k$-colorings of $G$ in the stated running time. The maximum satisfaction level over all profit profiles will then give the optimal value of \textsc{Fair $k$-Division Under Conflicts} for ($G,p_1,\ldots, p_k)$.

We first apply Lemma~\ref{lem:nice-chordal} and compute in time $\mathcal{O}(n^3)$ a nice tree decomposition $(T, \{\Bag_t\}_{t\in V(T)})$ of $G$ with $\mathcal{O}(n^2)$ bags, all of which are cliques. Recall that by definition $T$ is a rooted tree decomposition of $G$. Let $r$ be the root of $T$.
For every node $t\in V(T)$, we denote by $V_t$ the union of all bags $\Bag_{t'}$ such that $t'\in V(T)$ is a (not necessarily proper) descendant of $t$ in $T$.

We traverse tree $T$ bottom-up and use a dynamic programming approach to compute, for every node $t\in V(T)$ and every partial $k$-coloring $c$ of $G[\Bag_t]$, the family $P(t,c)$ of all profit profiles of partial $k$-colorings of $G[V_t]$ that agree with $c$ on $\Bag_t$.

Since $(T, \{\Bag_t\}_{t\in V(T)})$ is a nice tree decomposition, we have $\Bag_r = \emptyset$; in particular, the \emph{trivial} partial $k$-coloring $\emptyset^k$ consisting of $k$ empty sets is the only partial $k$-coloring of $G[\Bag_r]$. Thus, since $V_r = V(G)$ and every partial $k$-coloring of $G$ agrees with the trivial partial $k$-coloring of $G[\Bag_r]$ on $\Bag_r$, the set $P(r,\emptyset^k)$ is the set of all profit profiles of partial $k$-colorings of $G$, which is what we want to compute.

We consider various cases depending on the type of a node $t\in V(T)$ in the nice tree decomposition. For each of them we give a formula for computing the set $P(t,c)$ from the already computed sets of the form $P(t',c')$ where $t'$ is a child of $t$ in $T$, and argue why the formula is correct.

\begin{enumerate}
\item\label{step1} \textbf{$t$ is a leaf node.}

By the definition of a nice tree decomposition it follows that $\Bag_t=\emptyset$.
Thus, the only partial $k$-coloring of $G[\Bag_t]$ is the trivial one, $\emptyset^k$.
Clearly, $P(t,\emptyset^k)= \{(0,\ldots, 0)\}$.
\medskip

\item\label{step2} \textbf{$t$ is an introduce node.}

By definition, $t$ has exactly one child $t'$ and $\Bag_t = \Bag_{t'}\cup \{v\}$ holds for some vertex $v\in V\setminus X_{t'}$.
Clearly, $V_{t} = V_{t'} \cup \{v\}$, and this is a disjoint union.
(If $v\in V_{t'}$, then the subtree of $T$ consisting of all bags $\Bag_\tau$ such that $v\in \Bag_\tau$ is not connected; a contradiction.)
Consider an arbitrary partial $k$-coloring $c=(X_1,\ldots, X_k)$ of $G[\Bag_t]$. We want to compute $P(t,c)$ using the set $P(t',c')$, where $c' = (X_1\setminus\{v\},\ldots, X_k\setminus\{v\})$.
(Note that $c'$ is a partial $k$-coloring of $G[\Bag_{t'}]$.)
We claim that the following equality holds:
$$P(t,c)=\left\{
\begin{array}{ll}
\{\mathbf{q} + \mathbf{e}_j(p_j(v))\mid \mathbf{q}\in P(t',c')\}, & \hbox{if $v\in X_j$ for some $j\in \{1,\ldots, k\}$;}\\
P(t',c'), & \hbox{otherwise}.
\end{array}
\right.
$$

To show the recurrence, note first that if for all $j\in \{1,\dots, k\}$  we have $v\notin X_j$, then $c' = c$ and thus $P(t,c)=P(t',c')$ in this case. If, however, $v\in X_j$ for some $j\in \{1,\ldots, k\}$, then there can only be one such $j$, and thus $c'=(X_1,\dots, X_{j-1},X_j\setminus \{v\}, X_{j+1},\dots, X_k)$. In this case, we will need the fact that $v$ is not adjacent to any vertex of $V_{t'}\setminus \Bag_{t'}$.
Indeed, applying Lemma~\ref{lem:tree-dec} to $a = t$ and $b = t'$ shows that no vertex of
$V(G)\setminus V_{t'}$ is adjacent to any vertex of $V_{t'}\setminus \Bag_{t'}$, hence the statement follows since $v\in V(G)\setminus V_{t'}$.

The fact that all neighbors of $v$ in the set $V_{t'}$ are contained in $\Bag_{t'}$ implies that for every partial $k$-coloring of $G[V_{t'}]$ that agrees with $c'$ on $\Bag_{t'}$, adding $v$ to the $j$-th color class will result in a partial $k$-coloring of $G[V_{t}]$ that agrees with $c$ on $\Bag_t$. Thus, there is a bijective correspondence between the set of partial $k$-colorings of $G[V_{t}]$ that agree with $c$ on $\Bag_t$ and those of $G[V_{t'}]$ that agree with $c'$ on $\Bag_{t'}$, given by removing $v$ from the $j$-th color class. This implies the claimed equality $P(t,c)=\{\mathbf{q}+\mathbf{e}_j(p_j(v))\mid \mathbf{q}\in P(t',c')\}$.

\item\label{step3} \textbf{$t$ is a forget node.}

By definition, $t$ has exactly one child $t'$ in $T$ and $\Bag_{t}=\Bag_{t'}\setminus  \{v\}$ holds for some vertex $v\in V \setminus \Bag_t$. Thus, $V_t=V_{t'}$. Consider an arbitrary partial $k$-coloring $c=(X_1,\dots, X_k)$ of $G[\Bag_t]$.
We claim that the following equality holds:
$$P(t,c)=P(t',c)\cup \bigcup_{j: X_j = \emptyset}
P(t',(X_1,\dots, X_{j-1},\{v\},X_{j+1}\dots, X_k))\,.$$
Consider an arbitrary partial $k$-coloring $(Y_1,\ldots, Y_k)$ of $G[V_t]$ that agrees with $c$ on $\Bag_t$. If $v\not\in Y_j$ for all $j\in \{1,\ldots, k\}$, then $(Y_1,\ldots, Y_k)$ agrees with $c$ on $\Bag_{t'}$.
Suppose now that $v\in Y_j$ for some $j\in\{1,\ldots, k\}$. Then, $j$ is unique.
Furthermore, since $\Bag_{t'}$ is a clique in $G$ and hence in $G[V_{t'}]$, the fact that $v\in Y_j$ implies that $Y_j\cap \Bag_{t'} = \{v\}$, and consequently $X_j = Y_j\cap \Bag_{t} = \emptyset$. In this case, the partial $k$-coloring $(Y_1,\ldots, Y_k)$ agrees with the partial $k$-coloring $(X_1,\dots, X_{j-1}, \{v\}, X_{j+1}, \ldots, X_k)$ of $G[V_{t'}]$ on $\Bag_{t'}$.
Thus, every partial $k$-coloring of $G[V_t]$ that agrees with $c$ on $\Bag_t$ either agrees with $c$ on $\Bag_{t'}$ or agrees with $(X_1,\dots, X_{j-1},\{v\},X_{j+1}\dots, X_k)$ on $\Bag_{t'}$ for some $j\in \{1,\ldots, k\}$ such that $X_j = \emptyset$. Similar arguments can be used to show the converse inclusion, that is, any partial $k$-coloring of $G[V_{t'}]$ that satisfies one of the above conditions is a partial $k$-coloring of $G[V_t]$ that agrees with $c$ on $\Bag_t$. This implies the claimed equality.

\item\label{step4} \textbf{$t$ is a join node.}

By definition, $t$ has exactly two children $t_1$ and $t_2$ in $T$ and it holds that $\Bag_t = \Bag_{t_1} = \Bag_{t_2}$.
We claim that $V_{t_1}\cap V_{t_2}=\Bag_t$. It is clear that $\Bag_t\subseteq V_{t_1}\cap V_{t_2}. $ Assume for contradiction that there is a vertex $v\in V(G)$ such that $v\in (V_{t_1}\cap V_{t_2})\setminus \Bag_t$. Then there are nodes $t_1'$ and $t_2'$ of $T$ such that $v\in \Bag_{t_1'}$, $v\in \Bag_{t_2'}$, and $t_1'$ and $t_2'$ are (possibly not proper) descendants of $t_1$ and $t_2$, respectively. It follows that the subgraph of $T$ consisting of all bags containing $v$ is not connected; a contradiction. Thus $\Bag_t=V_{t_1}\cap V_{t_2}$, as claimed.
Furthermore, applying Lemma~\ref{lem:tree-dec} to $a = t_1$ and $b = t$ we can show that no vertex of $V_{t_1}\setminus \Bag_{t}$ is adjacent in $G$ to any vertex of $V(G)\setminus V_{t_1}$. Since
$V_{t_2}\setminus \Bag_{t}\subseteq V(G)\setminus V_{t_1}$, this implies that no vertex in
$V_{t_1}\setminus \Bag_{t}$ is adjacent in $G$ to any vertex of $V_{t_2}\setminus \Bag_{t}$.

Consider now an arbitrary partial $k$-coloring $c=(X_1,\dots, X_k)$ of $G[\Bag_t]$ (observe that $c$ is also a partial $k$-coloring of $G[\Bag_{t_1}]$ and $G[\Bag_{t_2}]$). In this case, we have the following recurrence relation:
$$P(t,c)=\{\mathbf{q_1}+\mathbf{q_2}-(p_1(X_1), \dots, p_k(X_k))\mid \mathbf{q_1}\in P(t_1,c), \mathbf{q_2}\in P(t_2,c)\}\,.$$

It is clear that for any partial $k$-coloring $(X_1', \ldots, X_k')$ of $G[V_t] $ that agrees with $c$ on $\Bag_t$, the $k$-tuples  $(X_1'\cap V_{t_1}, \ldots, X_k'\cap V_{t_1})$ and $(X_1'\cap V_{t_2}, \ldots, X_k'\cap V_{t_2})$ are partial $k$-colorings of $G[V_{t_1}] $ and $G[V_{t_2}]$ that agree with $c$ on $\Bag_{t_1}$ and $\Bag_{t_2}$, respectively. The fact that no vertex in $V_{t_1}\setminus \Bag_t$ is adjacent in $G$ to any vertex in $V_{t_2}\setminus \Bag_t$ implies that the other direction is also true: given partial $k$-colorings $(X_1', \ldots, X_k')$ and $(X_1'', \ldots, X_k'')$ of $G[V_{t_1}] $ and $G[V_{t_2}]$ that agree with $c$ on $\Bag_{t_1}$ and $\Bag_{t_2}$, respectively, we have $X_j'\cap \Bag_t=X_j''\cap \Bag_t=X_j$ for all $j\in\{1,\dots, k\}$, and thus $(X_1'\cup X_1'', \ldots, X_k'\cup X_k'')$ is a partial $k$-coloring of $G[V_t]$ that agrees with $c$ on $\Bag_t$.
Furthermore, for all $j\in\{1,\dots, k\}$, the fact that $V_{t_1}\cap V_{t_2}=\Bag_t$ implies that $X_j'\cap X_j''=X_j$, and hence
$p_j(X_j'\cup X_j'')=p_j(X_j')+p_j(X_j'')-p_j(X_j)$.
The claimed equality follows.
\end{enumerate}

It remains to estimate the time complexity of the algorithm. We compute a nice tree decomposition of $G$ in time $\mathcal{O}(n^3)$. Each of the $\mathcal{O}(n^2)$ bags is a clique, so in total we have $\mathcal{O}(n^k)$ partial $k$-colorings per bag.
Furthermore, note that for each partial coloring $(X_1,\ldots, X_k)$ of any induced subgraph of $G$ and each $j\in \{1,\ldots, k\}$, we have $p_j(X_j)\in \{0,1,\ldots, Q\}$. Thus, each set $P(t,c)$ has at most $(Q+1)^k$ elements.
For each of the $\mathcal{O}(n^{k+2})$ pairs $(t,c)$
where $t$ is a node of $T$ and $c$ is a partial $k$-coloring of $G[\Bag_t]$, we compute the set
$P(t,c)$ using the formula corresponding to the type of node $t$. The time complexity of this step depends on the type of the node. Case~\ref{step1} takes constant time.
In Case~\ref{step2}, we check in constant time whether $v\in X_j$ for some $j\in\{1,\dots, k\}$ and then compute the set $P(t,c)$ in time $\mathcal{O}((Q+1)^k)$.
In Case~\ref{step3}, we first compute in (constant) time $\mathcal{O}(k)$ the set of indices $j\in \{1,\ldots,k\}$ such that $X_j = \emptyset$. Then, the union given by the formula can be computed in time $\mathcal{O}((Q+1)^k)$, simply by iterating over all families in the union and keeping track of which of the  $\mathcal{O}((Q+1)^k)$ profit profiles appear in any of the families.
Finally, Case~\ref{step4} can be done in time $\mathcal{O}((Q+1)^{2k})$.
Altogether, this results in running time $\mathcal{O}((Q+1)^{2k})$ for each fixed $t\in V(T)$ and each partial $k$-coloring $c$ of $\Bag_t$. Consequently, the total running time of the algorithm is $\mathcal{O}(n^{k+2}(Q+1)^{2k})$.
\qed\end{proof}

\subsection{Graphs with bounded treewidth}
\label{sec:treewidth}

Recall that the width of a tree decomposition  $(T, \{\Bag_t\}_{t\in V(T)})$ of a graph $G$ is defined as $\max_{t\in V(T)}|\Bag_t|-1$. The \emph{treewidth} of a graph $G$
is the minimum possible width of a tree decomposition of $G$.
A graph class $\mathcal{G}$ is said to be of \emph{bounded treewidth} if there exists a nonnegative integer $\ell$ such that each graph in $\mathcal{G}$ has treewidth at most $\ell$.
{For each fixed treewidth bound $\ell$, given a graph $G$ of treewidth at most $\ell$,
a tree decomposition of $G$ of width at most $\ell$ can be computed in linear time~\cite{MR1417901}. Such a decomposition leads to linear-time algorithms for many problems that are generally NP-hard (see, e.g.,~\cite{MR1105479,MR1042649}).}

A similar approach as the one used in the proof of Theorem~\ref{thm:chordal}
for solving the \textsc{Fair $k$-Division Under Conflicts} on chordal graphs can be used on graphs of bounded treewidth.

Fix $k,\ell\ge 1$ and let $(G,p_1,\ldots, p_k)$ be the input to \textsc{Fair $k$-Division Under Conflicts} such that the treewidth of $G$ is at most $\ell$.
In time $\ell^{\mathcal{O}(\ell^3)}n$
we can compute a tree decomposition of $G$ a width at most $\ell$ using the algorithm of Bodlaender~\cite{MR1417901}. Clearly, the obtained tree decomposition has at most  $\ell^{\mathcal{O}(\ell^3)}n$ bags.
By Lemma~\ref{lem:nice} it follows that we can compute in time $\mathcal{O}(\ell^{\mathcal{O}(\ell^3)}n^3)$ a nice tree decomposition
$\mathcal{T} = (T, \{\Bag_t\}_{t\in V(T)})$ of $G$ of width at most $\ell$, with $\mathcal{O}(n^2)$ bags.
Every bag has at most $\ell+1$ vertices, so for every bag we have at most a constant number, \hbox{$(\ell+1)^{k+1}$},
partial $k$-colorings, which in total gives $\mathcal{O}(n^2)$
pairs $(t,c)$ of a node $t\in V(T)$ and a partial $k$-coloring $c$ of $t$.
For each such pair $(t,c)$, we again compute the family $P(t,c)$ of all profit profiles of partial $k$-colorings of $G[V_t]$ that agree with $c$ on $\Bag_t$.
Since $\mathcal{T}$ is a nice tree decomposition, every node is of one of the four possible types, and in Cases~\ref{step1}, \ref{step2}, and \ref{step4} we have identical equalities as in the corresponding cases in the proof of Theorem~\ref{thm:chordal}, while in Case~\ref{step3} the union over all $j$ such that $X_j=\emptyset$ of the sets $P(t',(X_1,\dots, X_{j-1},\{v\},X_{j+1}\dots, X_k))$
is replaced by the union over all $j$ such that $X_j\cup \{v\}$ is an independent set in $G$ of the sets
$P(t',(X_1,\dots, X_{j-1},X_j\cup \{v\},X_{j+1}\dots, X_k))$.
Since we can compute the adjacency matrix of $G$ in time $\mathcal{O}(n^2)$, we may assume that adjacency checks can be done in constant time.
Thus, the expressions in the formulas corresponding to each of the Cases~\ref{step2} and~\ref{step3} can be evaluated in time
$\mathcal{O}((Q+1)^k)$, while the corresponding time complexity of Case~\ref{step4} is $\mathcal{O}((Q+1)^{2k})$.
Altogether, this gives us the claimed running time and yields the following {theorem (where the constant hidden in the $\mathcal{O}$ notation depends on $k$ and $\ell$).}

\begin{theorem}\label{thm:boundedtreewidth}
For every $k\ge 1$ and $\ell\ge 1$, \textsc{Fair $k$-Division Under Conflicts} is solvable in time
$\mathcal{O}(n^2(n+(Q+1)^{2k}))$ for a graph $G$ of treewidth at most $\ell$, where $Q = \max_{1\le j\le k}p_j(V(G))$.
\end{theorem}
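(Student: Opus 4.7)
The plan is to adapt the dynamic-programming scheme used in the proof of Theorem~\ref{thm:chordal} to the setting of bounded treewidth. First I would invoke Bodlaender's algorithm~\cite{MR1417901} to obtain, in time $\ell^{\mathcal{O}(\ell^3)}n$, a tree decomposition of $G$ of width at most $\ell$ having $\ell^{\mathcal{O}(\ell^3)}n$ bags; then apply Lemma~\ref{lem:nice} to turn it, in time $\mathcal{O}(\ell^{\mathcal{O}(\ell^3)}n^3)$, into a nice tree decomposition $\mathcal{T}=(T,\{\Bag_t\}_{t\in V(T)})$ of width at most $\ell$ with $\mathcal{O}(n^2)$ bags. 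I would also precompute the adjacency matrix of $G$ in time $\mathcal{O}(n^2)$ so that every later adjacency query takes constant time.

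Next, for every node $t\in V(T)$ and every partial $k$-coloring $c$ of $G[\Bag_t]$, I would compute the set $P(t,c)$ of all profit profiles of partial $k$-colorings of $G[V_t]$ that agree with $c$ on $\Bag_t$, processing $T$ bottom-up. Since $|\Bag_t|\le \ell+1$, the number of partial $k$-colorings per bag is at most $(k+1)^{\ell+1}$, a constant depending only on $k$ and $\ell$, so there are only $\mathcal{O}(n^2)$ pairs $(t,c)$ to process. For leaf, introduce and join nodes I would reuse verbatim the recurrences from the chordal proof (steps \ref{step1}, \ref{step2}, \ref{step4}); their correctness arguments rely only on the tree-decomposition properties and on Lemma~\ref{lem:tree-dec}, not on bags being cliques.

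The one place where the chordal argument genuinely uses the clique property is step~\ref{step3} (forget node). There, cliqueness forced the forgotten vertex $v$ to be the \emph{only} vertex of $\Bag_{t'}$ in its color class, which is no longer true for a general bag. I would therefore replace that recurrence by
\[
P(t,c)=P(t',c)\,\cup\!\!\bigcup_{\substack{j\in\{1,\dots,k\}\\ X_j\cup\{v\}\text{ indep.\ in }G}}\!\!P(t',(X_1,\dots,X_{j-1},X_j\cup\{v\},X_{j+1},\dots,X_k)),
\]
where $c=(X_1,\dots,X_k)$. The correctness argument is a direct analogue: a partial $k$-coloring of $G[V_t]$ that agrees with $c$ on $\Bag_t$ either does not contain $v$ (first term) or places $v$ into some class $X_j$, in which case $X_j\cup\{v\}$ must be independent and the resulting $k$-tuple agrees on $\Bag_{t'}$ with the coloring obtained by adding $v$ to $X_j$. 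The independence tests are constant-time thanks to the precomputed adjacency matrix, so the formula is evaluated in time $\mathcal{O}((Q+1)^k)$.

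For the running-time bound I would note that every $P(t,c)$ has at most $(Q+1)^k$ elements; steps~\ref{step1}, \ref{step2}, \ref{step3} cost $\mathcal{O}((Q+1)^k)$ per pair, while step~\ref{step4} costs $\mathcal{O}((Q+1)^{2k})$ per pair. Summed over the $\mathcal{O}(n^2)$ pairs, and adding the $\mathcal{O}(n^3)$ preprocessing (absorbed into the $n^2\cdot n$ term), this yields $\mathcal{O}(n^2(n+(Q+1)^{2k}))$. The optimal satisfaction level is then read off from $P(r,\emptyset^k)$ at the root $r$. The main obstacle is really only the conceptual one in step~\ref{step3}: recognising that without cliqueness of bags one has to enumerate all color classes to which the forgotten vertex may be legally appended, and that this enumeration can still be carried out within the target time bound because adjacency checks are $\mathcal{O}(1)$ after preprocessing.
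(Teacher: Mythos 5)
Your proposal matches the paper's own proof essentially step for step: the same preprocessing via Bodlaender's algorithm followed by Lemma~\ref{lem:nice}, the same tables $P(t,c)$ with constantly many partial $k$-colorings per bag, the same observation that only the forget-node recurrence depends on bags being cliques and must be replaced by a union over all $j$ with $X_j\cup\{v\}$ independent, and the same running-time accounting using the precomputed adjacency matrix. No gaps.
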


\section{Approximation}
	\label{sec:fptas}
	
All the pseudo-polynomial dynamic programming algorithms presented in this paper share the following characteristics.
Throughout the execution feasible states are computed, where every state describes a profit allocation given by a feasible solution of {\sc Fair $k$-Division Under Conflicts}.
Each such state is represented by a $k$-dimensional vector $(q_1, \ldots, q_k) \in \mathbb{Z}_+^k$, where every entry $q_j$ describes the profit $p_j(X_j)$ assigned to agent $j$ by a partial coloring $(X_1, \ldots, X_k)$.
While Pareto-dominated states can be eliminated,
the total number of states remains trivially bounded by $(Q+1)^k$, where $Q = \max_{1\le j\le k}p_j(V(G))$.
The optimal solution with maximum satisfaction level can be determined at the end of such an algorithm by simply going through all generated states and inspecting their satisfaction levels.

In a canonical step of our algorithms a vertex $v$ (resp.\ item) is feasibly assigned to an agent $j$ thereby generating a new state $(q_1, \ldots, q_{j-1}, q_j+p_j(v), q_{j+1}, \ldots, q_k)$ from a previous state  $(q_1, \ldots, q_k)$.
The decisions taken by the algorithms depend only on the graph but not on the profit values of previously generated states.
Every vertex is assigned to each agent at most once.

Under these preconditions, we can derive a fully polynomial time approximation scheme (FPTAS) for each such dynamic programming algorithm (considering $k$ as a constant).
For an optimal satisfaction level $z^*$, an FPTAS computes for every given $\eps>0$, an approximate solution with satisfaction {level $z^A$ fulfilling $z^A \geq z^*/(1+\eps)$} with running time polynomial in the size of the encoded input and in $1/\eps$.

The FPTAS is based on the observation that the $k$ profit values of a solution can also be seen as $k$ objective function values in a multiobjective optimization problem.
Thus, the technique for deriving an FPTAS for the multiobjective knapsack problem  described in~\cite{ekp02} can be applied as follows.

Denote the upper bound for the profit assigned to agent $j$ by $\textit{UB}_j=p_j(V(G))$
and set $u_j = \lceil n \log_{1+\eps}\textit{UB}_j\rceil$, { where, as usual, $n = |V(G)|$}.
Partition the profit range for each agent $j$ into $u_j$ intervals
\begin{align*}
[1, (1+\eps)^{1/n}), \:
[(1+\eps)^{1/n}, (1+\eps)^{2/n}), \:
&[(1+\eps)^{2/n}, (1+\eps)^{3/n}), \ldots\\
&[(1+\eps)^{(u_j-1)/n}, (1+\eps)^{u_j/n}]\,.
\end{align*}
To obtain an FPTAS from the generic dynamic programming algorithm indicated above we
restrict the possible profit values $q_j$ allocated to agent $j$ to the lower interval endpoints of these intervals.
The FPTAS mimics exactly the operations of the exact dynamic program, but whenever a vertex $v$ is assigned to $j$, the resulting profit $q_j+p_j(v)$ is {\em rounded down} to the nearest interval endpoint.
Note that this does not change the steps of the dynamic program since we assumed that its decisions do not depend on the profit values of states.

{The bound $u_j = \lceil n \log_{1+\eps}\textit{UB}_j\rceil$ is in $\mathcal{O}(n /\eps\cdot \log_2(\textit{UB}_j))$, which is polynomial in the encoding length of the input, since $$\log_{1+\eps}\textit{UB}_j=
(\ln2 \log_2 \textit{UB}_j)/\ln(1+\eps)
\leq
(2\ln2 \log_2 \textit{UB}_j)/\eps\,,$$ for all $\eps \in (0,1)$. The above inequality follows from \hbox{$x\le 2\ln(1+x)$}, which can be verified to hold for all $x \in (0,1)$ by standard calculus.}
Thus, the total number of states in the modified algorithm is bounded by $\mathcal{O}((n/\eps)^k (\log_2 Q)^k)$.

Concerning the loss of accuracy we can proceed similarly to~\cite{ekp02} and compare an arbitrary state $(q_1, q_2, \ldots, q_k)$ of the exact dynamic program to some state of the FPTAS consisting of lower interval endpoints $(\tq_1, \tq_2, \ldots, \tq_k)$.
For every state $(q_1, \ldots, q_j, \ldots, q_k)$ generated by the exact algorithm after assigning $i$ vertices to agent $j$, we claim that in the FPTAS there exists a state $(\tq_1, \tq_2, \ldots, \tq_k)$ of lower interval endpoints such that
\begin{equation}\label{eq:fptasrecur}
q_j \leq (1+\eps)^{i/n} \tq_j\,.
\end{equation}
This claim can be shown by induction.
For $i=1$, there was one vertex $v$ assigned to agent $j$ giving profit $q_j=p_j(v)$.
In the FPTAS, there will be a state where $\tq_j$ is the largest lower interval endpoint not exceeding $q_j$.
By construction of the intervals, we have $(1+\eps)^{1/n} \tq_j \geq q_j$.

Assuming the claim to be true for some $i-1$,
we consider the $i$-th assignment of a vertex $v$ to $j$.
In the exact algorithm, $p_j(v)$ is added to some value $q_j$
for which there exists a lower interval endpoint $\tq_j$ fulfilling
$q_j \leq (1+\eps)^{(i-1)/n} \tq_j$.
During the FPTAS, $p_j(v)$ will also be added to $\tq_j$ and the result will be rounded down to a lower interval endpoint $\tq'$ with
$(1+\eps)^{1/n} \tq' \geq \tq_j + p_j(v) \geq (1+\eps)^{-(i-1)/n} q_j + p_j(v)
\geq  (1+\eps)^{-(i-1)/n} (q_j + p_j(v)) $.
Moving terms around, this proves (\ref{eq:fptasrecur}) for the new profit $q_j + p_j(v)$.

Since there can be at most $n$ vertices assigned to any agent, (\ref{eq:fptasrecur}) holds also for the satisfaction level of the optimal solution.

Summarizing the above discussion and the proofs of Theorem~\ref{thm:cocomp}, Corollary~\ref{thm:biconvexdisc}, Theorem~\ref{thm:chordal}, and Theorem~\ref{thm:boundedtreewidth}, we conclude:

\begin{sloppypar}
\begin{theorem}\label{thm:fptas}
\textsc{Fair $k$-Division Under Conflicts} with constant $k$ admits an FPTAS if the conflict graph is
a cocomparability graph, a biconvex bipartite graph, a chordal graph, or a graph of bounded treewidth.
\end{theorem}
\end{sloppypar}

To put Theorem~\ref{thm:fptas} in perspective, recall that by~Theorem~\ref{th:inapprox} no constant-factor approximation for
\textsc{Fair $k$-Division Under Conflicts} exists for general graphs, unless P = NP.

	\section{Conclusions}
	\label{sec:conc}
	In this paper we introduced the \textsc{Fair $k$-Division Under Conflicts} and studied it from a computational complexity point of view, with respect to various restrictions on the conflict graph. In particular, we could show that the problem is strongly $\textrm{NP}$-hard on general bipartite conflict graphs, but can be solved in pseudo-polynomial time on biconvex bipartite graphs, on chordal graphs, on cocomparability graphs, and on graphs of bounded treewidth.
	There are other graph classes sandwiched between the two classes of our results, for which the complexity of \textsc{Fair $k$-Division Under Conflicts} is still open.
	In particular, we can derive open problems from the following sequence of inclusions:
	biconvex bipartite $\subseteq$
	convex bipartite $\subseteq$
	interval bigraph $\subseteq$
	chordal bipartite $\subseteq$ bipartite.
	{We believe that a positive result for convex bipartite graphs could be within reach.}
	Outside this chain of inclusions, we pose the complexity of the problem for planar bipartite conflict graphs as another interesting open question.

\appendix

\section{Proof of Theorem \ref{th:inapprox}}\label{appendixA}

Fix an integer $k\ge 1$. We give a reduction from the
\textsc{Independent Set} problem.
We construct a graph $G'$ by taking $k$ copies of $G$ and by adding all possible edges between vertices from different copies. Furthermore we take $k$ {``unit''} profit functions $p_1,\dots, p_k$ from $V(G')$ to $\{1\}$. We claim that the maximum size of an independent set in $G$ equals the maximum satisfaction level of a partial $k$-coloring in $G'$ (with respect to the profit functions $p_1,\dots, p_k$). Given a maximum independent set $I$ in $G$ of size $q$ one can immediately obtain a partial $k$-coloring $(X_1,\dots,X_k)$ of $G'$ with satisfaction level $q$ by inserting all vertices of $I$ in the $j$-th copy of $G$ into $X_j$, for all $j=1,\ldots,k$. On the other hand, given a partial $k$-coloring $(X_1,\dots,X_k)$ of $G'$ with satisfaction level $q$, one can simply choose $X_1$, which is an independent set completely contained in one copy of $G$. Thus, $X_1$ corresponds to an independent set in $G$ of size $q$.

Suppose that for some $\varepsilon\in (0,1)$ there exists a polynomial-time algorithm $A$ that approximates \textsc{Fair $k$-Division Under Conflicts} within a factor of $|V(G)|^{1-\varepsilon}$ on input instances with unit profit functions. We will show that this implies the existence of a polynomial-time algorithm $A'$  approximating the \textsc{Independent Set} problem within a factor of $|V(G)|^{1-\varepsilon'}$ where $\varepsilon'= \varepsilon/2$. As shown by Zuckerman~\cite{MR2403018}, this would imply $\textrm{P} = \textrm{NP}$.

Consider an input graph $G$ to the \textsc{Independent Set} problem.
The algorithm $A'$ proceeds as follows. If $|V(G)|<k^{2(1-\varepsilon)/\varepsilon}$, then the graph is of constant order and the problem can be solved optimally in $\mathcal{O}(1)$ time. If $|V(G)|\ge k^{2(1-\varepsilon)/\varepsilon}$, then the graph $G'$ is constructed following the above reduction, a partial $k$-coloring $(X_1,\ldots, X_k)$ is computed using algorithm $A$ on $G'$ equipped with $k$ unit profit functions, and a subset of $V(G)$ corresponding to $X_1$ is returned. Clearly, the algorithm runs in polynomial time and computes an independent set in $G$. Let $q$ denote the maximum satisfaction level of a partial $k$-coloring in $G'$. By the above claim, the independence number of $G$ equals $q$.
Thus, to complete the proof, it suffices to show that $|X_1|\ge q/(|V(G)|^{1-\varepsilon'})$. By assumption on $A$, we have that $|X_1|\ge  q/(|V(G')|^{1-\varepsilon})$.
We want to show that
$q/|V(G')|^{1-\varepsilon}\ge q/|V(G)|^{1-\varepsilon'}$, or, equivalently,
$1/k^{1-\varepsilon}|V(G)|^{1-\varepsilon} \ge 1/|V(G)|^{1-\varepsilon/2}$.
After some straightforward algebraic manipulations, this inequality simplifies to
the equivalent inequality $|V(G)|\ge k^{2(1-\varepsilon)/\varepsilon}$, which is true by assumption.\qed

\section{A remark on biconvex graphs}
\label{sec:appendix}

Biconvex bipartite graphs were characterized by forbidden induced subgraphs by Tucker in~\cite{MR295938}.
The list of forbidden induced subgraphs includes all cycles except the cycle of length four and five additional graphs, including the two graphs $F_1$ and $F_2$ depicted in Figure~\ref{fig:fis}.

\begin{figure}[h!]
  \centering
   \includegraphics[width=0.7\textwidth]{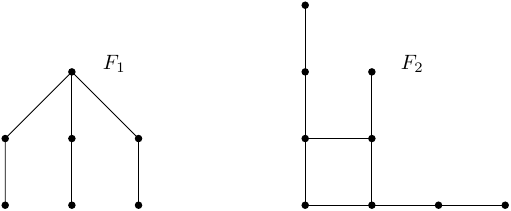}
\caption{Two forbidden induced subgraphs for biconvex bipartite graphs.}
\label{fig:fis}
\end{figure}

\begin{proposition}\label{prop:disconnected}
There exists a disconnected biconvex bipartite graph that is not an induced subgraph of any connected
biconvex bipartite graph.
\end{proposition}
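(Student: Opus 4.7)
The plan is to construct an explicit disconnected biconvex bipartite graph $G$ and show, via Tucker's forbidden-induced-subgraph characterization (described above around Figure~\ref{fig:fis}), that no connected biconvex bipartite graph contains $G$ as an induced subgraph.

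First, I would take $G = G_1 \sqcup G_2$, where $G_1$ and $G_2$ are small biconvex bipartite graphs whose biconvex orderings are highly rigid in the sense that certain distinguished vertices of each $G_i$ are forced to occupy the extreme ends of one (or both) sides in every biconvex ordering of $G_i$. Natural candidates are the $4$-cycle with one pendant attached to each of its four vertices (which forces the pendants to flank the cycle vertices in both orderings), or an induced subgraph of $F_1$ or $F_2$ obtained by deleting a single vertex; if these are not rigid enough, additional pendant structure can be attached. Biconvexity of $G$ is then immediate, obtained by concatenating biconvex orderings of $G_1$ and $G_2$ on each side.

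Second, suppose toward contradiction that $\tilde G$ is a connected biconvex bipartite graph containing $G$ as an induced subgraph, and fix a biconvex ordering $\prec$ of $\tilde G$. A key observation is that the restriction $\prec|_{V(G)}$ is a biconvex ordering of $G$: for every $v \in V(G)$ one has $N_G(v) = N_{\tilde G}(v) \cap V(G)$, and the intersection of an interval in $\prec$ with $V(G)$ is an interval in the restricted ordering. Together with the rigidity of $G_1$ and $G_2$, this forces the vertex sets of the two components to sit in block-like regions within both the $A$-order and the $B$-order of $\tilde G$, with the extremal (pendant) vertices immediately adjacent in the ordering to their corresponding core vertices.

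Third, since $\tilde G$ is connected, there exists a path $P$ in $\tilde G$ from a vertex of $G_1$ to a vertex of $G_2$ whose internal vertices lie outside $V(G)$. The main obstacle is a case analysis on the length of a shortest such $P$ and its attachment pattern: in each case one must show either that some neighborhood of a vertex of $G$ fails to be an interval under $\prec$, or else that $\tilde G$ contains an induced subgraph isomorphic to $C_{2k}$ for some $k \ge 3$, to $F_1$, or to $F_2$. Either conclusion contradicts the biconvexity of $\tilde G$. I expect the rigidity of $G_1$ and $G_2$ to leave no positional room around the extremal pendants through which the internal vertices of $P$ can pass while respecting every interval constraint inherited from $\prec$, thereby forcing the desired forbidden configuration and completing the proof.
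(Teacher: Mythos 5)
Your high-level strategy matches the paper's: exhibit an explicit disconnected biconvex bipartite graph, assume it embeds as an induced subgraph of a connected biconvex bipartite graph, use connectivity to obtain a shortest path joining the two components, and derive a contradiction via Tucker's forbidden-induced-subgraph characterization. The preliminary observation you make is also correct: since $N_G(v)=N_{\tilde G}(v)\cap V(G)$ for an induced subgraph and the intersection of an interval with a subset is an interval in the restricted order, biconvexity is hereditary, consistent with the existence of a forbidden-subgraph characterization.

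However, there is a genuine gap: you never commit to a specific graph and never carry out the case analysis, which is where the entire content of the proposition lives. You write that you ``expect the rigidity \dots to leave no positional room \dots thereby forcing the desired forbidden configuration,'' but this expectation is exactly the claim to be proved, and it is sensitive to the choice of example. Not every disjoint union of rigid-looking biconvex graphs works: for instance $2K_2$ is biconvex and embeds into the path $P_5$, which is connected and biconvex, so the mere presence of pendant vertices at the extremes of every ordering of each component is not an obstruction. Moreover, your intermediate claim that rigidity of the components forces them to occupy ``block-like regions'' of the ambient ordering does not follow from heredity alone: a biconvex ordering of a disconnected graph may interleave the components arbitrarily when neighborhoods are small, so this too would have to be proved for whatever example you fix. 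By contrast, the paper chooses a concrete $12$-vertex graph $G$ (Figure~\ref{fig:example}) together with a disjoint $K_2$, takes a shortest path $P$ from the $K_2$ to $G$ with attachment vertex $x$, and then establishes by four explicit forbidden-subgraph checks (each exhibiting an induced $F_1$, an induced $C_6$, or an induced $F_2$ on named vertices) that $x$ must be adjacent to $b_2$ and $b_3$ but not to $b_4$ or $b_5$, after which an induced $F_2$ arises and gives the contradiction. Until you fix such an example and exhibit the forbidden configurations explicitly, the argument is a plausible plan rather than a proof.
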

\begin{sloppypar}
\begin{proof}
Consider the graph $G$ depicted in Figure~\ref{fig:example}.

\begin{figure}[h!]
  \centering
   \includegraphics[width=0.7\textwidth]{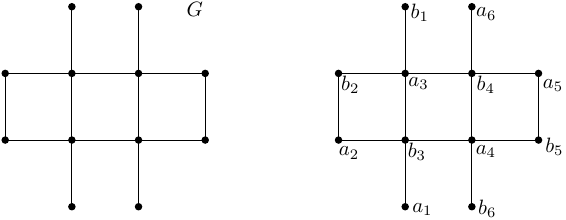}
\caption{A $12$-vertex biconvex bipartite graph and a biconvex labeling of it.}
\label{fig:example}
\end{figure}

As shown by the vertex labeling in the figure, $G$ is a biconvex bipartite graph.
Consequently, the graph $G+K_2$, the disjoint union of $G$ and a complete graph of order two, is also a biconvex bipartite graph.
We will show that $G+K_2$ is not an induced subgraph of any connected biconvex bipartite graph.

Fix a labeling of $G$ as in Figure~\ref{fig:example}, take a disjoint copy of $K_2$, call it $G'$, and suppose for a contradiction that the disjoint union $G+G'$ is an induced subgraph of a connected biconvex bipartite graph $H$.
Let $A$ and $B$ denote the two parts of a bipartition of $H$ so that
$\{a_1,\ldots,a_6\} \subseteq A$ (and then $\{b_1,\ldots,b_6\} \subseteq B$).

Since $H$ is connected, it contains a path from $V(G')$ to $V(G)$. Let $P$ be a shortest such path.
Since the sets $V(G)$ and $V(G')$ are disjoint and the are no edges between them, $P$ has at least three vertices.
Let $x$ be the only vertex on $P$ that has a neighbor in $G$, let $y$ be the neighbor of $x$ on $P$ such that $y\not\in V(G)$,
and let $z$ be defined as follows:
$$z = \left\{
  \begin{array}{ll}
    \text{the neighbor of $y$ on $P$ other than $x$}, & \hbox{if $P$ has at least $4$ vertices;} \\
    \text{the neighbor of $y$ in $G'$}, & \hbox{if $P$ has exactly three vertices.}
  \end{array}
\right.$$
Since $H$ is bipartite, it contains no cycle of length three. This implies that vertices $x$ and $z$ are not adjacent
to each other.

By symmetry of $G$, we may assume that $x\in A$ (and thus $y\in B$ and $z\in A$).
Furthermore, by the minimality of $P$, vertices $y$ and $z$ do not have any neighbors in $V(G)$.
We make a series of observations about the neighborhood of $x$ in $V(G)$.
\begin{itemize}
  \item Vertex $x$ cannot be adjacent to both $b_3$ and $b_4$, since otherwise $H$ would contain an induced $F_1$ with vertex set
$\{x,y,z,b_3,a_2,b_4,a_5\}$.

By symmetry, we may assume that $x$ is not adjacent to $b_4$.

  \item Vertex $x$ is not adjacent to $b_5$.
Suppose that it is. Then $x$ is not adjacent to $b_3$, since
otherwise the set $\{x,b_3,a_3,b_4,a_5,b_5\}$ would induce a $6$-cycle in $H$.
But now, $H$ contains an induced $F_1$ with vertex set
$\{x,b_5,a_4,b_3,a_2,b_4,a_6\}$, a contradiction.

  \item Vertex $x$ is adjacent to $b_3$.
Suppose that this is not the case.
Then $x$ is not adjacent to $b_i$ for $i\in \{1,2\}$, since otherwise $H$ would contain an induced $F_1$ with vertex set
$\{x,b_i,a_3,b_3,a_1,b_4,a_5\}$.
Therefore, the only possible neighbor of $x$ in $V(G)$ is $b_6$.
But now, $H$ contains an induced $F_1$ with vertex set
$\{x,b_6,a_4,b_3,a_1,b_4,a_5\}$, a contradiction.

\item Vertex $x$ is adjacent to $b_2$, since otherwise $H$ would contain an induced $F_1$ with vertex set
$\{y,x,b_3,a_2,b_2,a_4,b_5\}$.
\end{itemize}
To conclude the proof, we observe that $H$ contains an induced $F_2$ with vertex set
$\{z,y,x,b_2,a_3,b_3,a_1,b_4,a_5\}$, a contradiction.
\qed
\end{proof}
\end{sloppypar}

\medskip
{\footnotesize\noindent{\it Acknowledgements.}
The work of this paper was done in the framework of two bilateral projects between University of Graz and University of Primorska, financed by the OeAD (SI 22/2018 and SI 31/2020) and the Slovenian Research Agency (BI-AT/18-19-005 and BI-AT/20-21-015).
		The authors acknowledge partial support of the Slovenian Research Agency (I0-0035, research programs P1-0285, P1-0383, and P1-0404, research projects N1-0102, N1-0160, N1-0210, J1-3001, J1-3002, J1-3003, J1-4008, and J5-4596, and a Young Researchers Grant)
		and the European Commission for funding the InnoRenew CoE project (Grant Agreement \#739574) under the Horizon2020 Widespread-Teaming program and the Republic of Slovenia (Investment funding of the Republic of Slovenia and the European Union of the European
		Regional Development Fund)
		and by the Field of Excellence ``COLIBRI'' at the University of Graz and by the Federal Ministry for Digital and Economic Affairs of the Republic of Austria through the COIN project FIT4BA.}


\begin{thebibliography}{10}

\bibitem{abbas2000biconvex}
N.~Abbas and L.~K. Stewart.
\newblock Biconvex graphs: ordering and algorithms.
\newblock {\em Discrete Applied Mathematics}, 103(1-3):1--19, 2000.

\bibitem{MR2602826}
L.~Addario-Berry, W.~S. Kennedy, A.~D. King, Z.~Li, and B.~Reed.
\newblock Finding a maximum-weight induced {$k$}-partite subgraph of an
  {$i$}-triangulated graph.
\newblock {\em Discrete Applied Mathematics}, 158(7):765--770, 2010.

\bibitem{MR765704}
V.~E. Alekseev.
\newblock The effect of local constraints on the complexity of determination of
  the graph independence number.
\newblock In {\em Combinatorial-Algebraic Methods in Applied Mathematics},
  pages 3--13. Gorky University Press, 1982.
\newblock in Russian.

\bibitem{amanatidis2017approximation}
G.~Amanatidis, E.~Markakis, A.~Nikzad, and A.~Saberi.
\newblock Approximation algorithms for computing maximin share allocations.
\newblock {\em ACM Transactions on Algorithms}, 13(4):52, 2017.

\bibitem{aks17}
C.~Annamalai, C.~Kalaitzis, and O.~Svensson.
\newblock Combinatorial algorithm for restricted max-min fair allocation.
\newblock {\em ACM Transactions on Algorithms}, 13(3), 2017.

\bibitem{MR1105479}
S.~Arnborg, J.~Lagergren, and D.~Seese.
\newblock Easy problems for tree-decomposable graphs.
\newblock {\em Journal of Algorithms}, 12(2):308--340, 1991.

\bibitem{AS10}
A.~Asadpour and A.~Saberi.
\newblock An approximation algorithm for max-min fair allocation of indivisible
  goods.
\newblock {\em SIAM Journal on Computing}, 39(7):2970--2989, 2010.

\bibitem{azep98}
Y.~Azar and L.~Epstein.
\newblock On-line machine covering.
\newblock {\em Journal of Scheduling}, 1:67--77, 1998.

\bibitem{MR2277128}
N.~Bansal and M.~Sviridenko.
\newblock The {S}anta {C}laus problem.
\newblock In {\em S{TOC}'06: {P}roceedings of the 38th {A}nnual {ACM}
  {S}ymposium on {T}heory of {C}omputing}, pages 31--40. 2006.

\bibitem{BK20}
S.~Barman and S.~K. Krishnamurthy.
\newblock Approximation algorithms for maximin fair division.
\newblock {\em ACM Transactions on Economics and Computation}, 8(1), 2020.

\bibitem{BLM21}
X.~Bei, X.~Lu, P.~Manurangsi, and W.~Suksompong.
\newblock The price of fairness for indivisible goods.
\newblock {\em Theory of Computing Systems}, pages 1--25, 2021.

\bibitem{MR989117}
C.~Berge.
\newblock Minimax relations for the partial {$q$}-colorings of a graph.
\newblock {\em Discrete Mathematics}, 74(1-2):3--14, 1989.

\bibitem{beda05}
I.~Bezakova and V.~Dani.
\newblock Allocating indivisible goods.
\newblock {\em ACM SIGecom Exchanges}, 5(3):11--18, 2005.

\bibitem{MR1320296}
J.~R.~S. Blair and B.~Peyton.
\newblock An introduction to chordal graphs and clique trees.
\newblock In {\em Graph theory and sparse matrix computation}, volume~56 of
  {\em IMA Vol. Math. Appl.}, pages 1--29. Springer, New York, 1993.

\bibitem{boja95}
H.~Bodlaender and K.~Jansen.
\newblock On the complexity of scheduling incompatible jobs with unit-times.
\newblock In {\em MFCS '93: Proceedings of the 18th International Symposium on
  Mathematical Foundations of Computer Science}, pages 291--300. Springer,
  1993.

\bibitem{MR1417901}
H.~L. Bodlaender.
\newblock A linear-time algorithm for finding tree-decompositions of small
  treewidth.
\newblock {\em SIAM Journal on Computing}, 25(6):1305--1317, 1996.

\bibitem{ijcai2017-20}
S.~Bouveret, K.~Cechlárová, E.~Elkind, A.~Igarashi, and D.~Peters.
\newblock Fair division of a graph.
\newblock In {\em Proceedings of the Twenty-Sixth International Joint
  Conference on Artificial Intelligence, {IJCAI-17}}, pages 135--141, 2017.

\bibitem{hand16}
S.~Bouveret, Y.~Chevaleyre, and N.~Maudet.
\newblock Fair allocation of indivisible goods.
\newblock In F.~Brandt, V.~Conitzer, U.~Endriss, J.~Lang, and A.~D. Procaccia,
  editors, {\em Handbook of Computational Social Choice}, pages 284--310.
  Cambridge University Press, 2016.

\bibitem{MR1686154}
A.~Brandst\"{a}dt, V.~B. Le, and J.~P. Spinrad.
\newblock {\em Graph classes: a survey}.
\newblock SIAM Monographs on Discrete Mathematics and Applications. Society for
  Industrial and Applied Mathematics (SIAM), 1999.

\bibitem{Brito21}
S.~S. Brito and H.~G. Santos.
\newblock Preprocessing and cutting planes with conflict graphs.
\newblock {\em Computers and Operations Research}, 128:105176, 2021.

\bibitem{CCK09}
D.~Chakrabarty, J.~Chuzhoy, and S.~Khanna.
\newblock On allocating goods to maximize fairness.
\newblock In {\em Proceedings - Annual IEEE Symposium on Foundations of
  Computer Science, FOCS}, pages 107--116, 2009.

\bibitem{iwoca2020}
N.~Chiarelli, M.~Krnc, M.~Milani\v{c}, U.~Pferschy, N.~Piva\v{c}, and
  J.~Schauer.
\newblock Fair packing of independent sets.
\newblock In {\em Combinatorial Algorithms - 31st International Workshop,
  {IWOCA} 2020}, volume 12126 of {\em LNCS}, pages 154--165. Springer, 2020.

\bibitem{Coniglio21}
S.~Coniglio, F.~Furini, and P.~{San Segundo}.
\newblock A new combinatorial branch-and-bound algorithm for the knapsack
  problem with conflicts.
\newblock {\em European Journal of Operational Research}, 289(2):435--455,
  2021.

\bibitem{MR2572804}
T.~H. Cormen, C.~E. Leiserson, R.~L. Rivest, and C.~Stein.
\newblock {\em Introduction to algorithms}.
\newblock MIT Press, Cambridge, MA, third edition, 2009.

\bibitem{MR1042649}
B.~Courcelle.
\newblock The monadic second-order logic of graphs. {I}. {R}ecognizable sets of
  finite graphs.
\newblock {\em Information and Computation}, 85(1):12--75, 1990.

\bibitem{MR3380745}
M.~Cygan, F.~V. Fomin, {\L}.~Kowalik, D.~Lokshtanov, D.~Marx, M.~Pilipczuk,
  M.~Pilipczuk, and S.~Saurabh.
\newblock {\em Parameterized algorithms}.
\newblock Springer, 2015.

\bibitem{MR4153286}
K.~K. Dabrowski, C.~Feghali, M.~Johnson, G.~Paesani, D.~Paulusma, and
  P.~Rz\k{a}\.{z}ewski.
\newblock On cycle transversals and their connected variants in the absence of
  a small linear forest.
\newblock {\em Algorithmica}, 82(10):2841--2866, 2020.

\bibitem{dpsw11}
A.~Darmann, U.~Pferschy, J.~Schauer, and G.~Woeginger.
\newblock Paths, trees and matchings under disjunctive constraints.
\newblock {\em Discrete Applied Mathematics}, 159:1726--1735, 2011.

\bibitem{MR1097650}
D.~de~Werra.
\newblock Packing independent sets and transversals.
\newblock In {\em Combinatorics and graph theory}, volume~25 of {\em Banach
  Center Publ.}, pages 233--240. PWN, Warsaw, 1989.

\bibitem{DFL1982}
B.~L. Deuermeyer, D.~K. Friesen, and M.~A. Langston.
\newblock Scheduling to maximize the minimum processor finish time in a
  multiprocessor system.
\newblock {\em SIAM Journal on Algebraic and Discrete Methods}, 3(2):190--196,
  1982.

\bibitem{ekp02}
T.~Erlebach, H.~Kellerer, and U.~Pferschy.
\newblock Multiobjective knapsack problems.
\newblock {\em Management Science}, 48:1603--1612, 2002.

\bibitem{raey09}
G.~Even, M.~M. Halld{\'o}rsson, L.~Kaplan, and D.~Ron.
\newblock Scheduling with conflicts: online and offline algorithms.
\newblock {\em Journal of Scheduling}, 12(2):199--224, 2009.

\bibitem{Factorovich20}
P.~Factorovich, I.~Méndez-Díaz, and P.~Zabala.
\newblock Pickup and delivery problem with incompatibility constraints.
\newblock {\em Computers and Operations Research}, 113:104805, 2020.

\bibitem{Fleszar22}
K.~Fleszar.
\newblock A {MILP} model and two heuristics for the bin packing problem with
  conflicts and item fragmentation.
\newblock {\em European Journal of Operational Research}, 303(1):37--53, 2022.

\bibitem{Furman18}
H.~Furmańczyk and M.~Kubale.
\newblock Scheduling of unit-length jobs with cubic incompatibility graphs on
  three uniform machines.
\newblock {\em Discrete Applied Mathematics}, 234:210--217, 2018.

\bibitem{MR912032}
F.~Gavril.
\newblock Algorithms for maximum {$k$}-colorings and {$k$}-coverings of
  transitive graphs.
\newblock {\em Networks}, 17(4):465--470, 1987.

\bibitem{GHS21}
M.~Ghodsi, M.~T. Hajiaghayi, M.~Seddighin, S.~Seddighin, and H.~Yami.
\newblock Fair allocation of indivisible goods: Improvement.
\newblock {\em Mathematics of Operations Research}, 46(3):1038--1053, 2021.

\bibitem{golo05}
D.~Golovin.
\newblock Max-min fair allocation of indivisible goods.
\newblock Technical Report CMU-CS-05-144, Carnegie Mellon University, 2005.

\bibitem{MR2063679}
M.~C. Golumbic.
\newblock {\em Algorithmic {G}raph {T}heory and {P}erfect {G}raphs}, volume~57
  of {\em Annals of Discrete Mathematics}.
\newblock Elsevier, second edition, 2004.

\bibitem{MR936633}
M.~Gr\"{o}tschel, L.~Lov\'{a}sz, and A.~Schrijver.
\newblock {\em Geometric algorithms and combinatorial optimization}, volume~2
  of {\em Algorithms and Combinatorics: Study and Research Texts}.
\newblock Springer-Verlag, Berlin, 1988.

\bibitem{Hu15}
Z.-H. Hu, J.-B. Sheu, L.~Zhao, and C.-C. Lu.
\newblock A dynamic closed-loop vehicle routing problem with uncertainty and
  incompatible goods.
\newblock {\em Transportation Research Part C: Emerging Technologies},
  55:273--297, 2015.

\bibitem{khoda13}
K.~Khodamoradi, R.~Krishnamurti, A.~Rafiey, and G.~Stamoulis.
\newblock {PTAS} for ordered instances of resource allocation problems.
\newblock In {\em Proceedings of the 33rd International Conference on
  Foundations of Software Technology and Theoretical Computer Science, FSTTCS
  2013}, volume~24 of {\em LIPICS}, pages 461--473, 2013.

\bibitem{KPW18}
D.~Kurokawa, A.~D. Procaccia, and J.~Wang.
\newblock Fair enough: Guaranteeing approximate maximin shares.
\newblock {\em Journal of the ACM}, 65(2), 2018.

\bibitem{MR4401492}
O.~Kuryatnikova, R.~Sotirov, and J.~C. Vera.
\newblock The maximum {$k$}-colorable subgraph problem and related problems.
\newblock {\em INFORMS J. Comput.}, 34(1):656--669, 2022.

\bibitem{MR347690}
P.~G.~H. Lehot.
\newblock An optimal algorithm to detect a line graph and output its root
  graph.
\newblock {\em Journal of the Association for Computing Machinery},
  21:569--575, 1974.

\bibitem{Mallek22}
A.~Mallek and M.~Boudhar.
\newblock Scheduling on uniform machines with a conflict graph: complexity and
  resolution.
\newblock {\em International Transactions in Operational Research}, to appear,
  2022.

\bibitem{mast12}
M.~Mastrolilli and G.~Stamoulis.
\newblock Restricted max-min fair allocations with inclusion-free intervals.
\newblock In {\em Proceedings of International Computing and Combinatorics
  Conference COCOON 2012}, volume 7434 of {\em LNCS}, pages 98--108. Springer,
  2012.

\bibitem{Miao20}
D.~Miao, Z.~Cai, J.~Li, X.~Gao, and X.~Liu.
\newblock The computation of optimal subset repairs.
\newblock {\em Proceedings of the VLDB Endowment}, 13(12):2061--2074, 2020.

\bibitem{MR3897528}
N.~Misra, F.~Panolan, A.~Rai, V.~Raman, and S.~Saurabh.
\newblock Parameterized algorithms for max colorable induced subgraph problem
  on perfect graphs.
\newblock {\em Algorithmica}, 81(1):26--46, 2019.

\bibitem{mimt10}
A.~Muritiba, M.~Iori, E.~Malaguti, and P.~Toth.
\newblock {Algorithms for the bin packing problem with conflicts}.
\newblock {\em INFORMS Journal on Computing}, 22(3):401--415, 2010.

\bibitem{palvolgyi2014partitioning}
D.~P{\'a}lv{\"o}lgi.
\newblock Partitioning to three matchings of given size is {NP}-complete for
  bipartite graphs.
\newblock {\em Acta Universitatis Sapientiae, Informatica}, 6(2):206--209,
  2014.

\bibitem{pfsch09}
U.~Pferschy and J.~Schauer.
\newblock The knapsack problem with conflict graphs.
\newblock {\em Journal of Graph Algorithms and Applications}, 13(2):233--249,
  2009.

\bibitem{flowconflict13}
U.~Pferschy and J.~Schauer.
\newblock The maximum flow problem with disjunctive constraints.
\newblock {\em Journal of Combinatorial Optimization}, 26(1):109--119, 2013.

\bibitem{pfsch17}
U.~Pferschy and J.~Schauer.
\newblock Approximation of knapsack problems with conflict and forcing graphs.
\newblock {\em Journal of Combinatorial Optimization}, 33(4):1300--1323, 2017.

\bibitem{MR2057781}
B.~Reed, K.~Smith, and A.~Vetta.
\newblock Finding odd cycle transversals.
\newblock {\em Operations Research Letters}, 32(4):299--301, 2004.

\bibitem{MR424435}
N.~D. Roussopoulos.
\newblock A max {$\{m,n\}$} algorithm for determining the graph {$H$} from its
  line graph {$G$}.
\newblock {\em Information Processing Letters}, 2:108--112, 1973.

\bibitem{Sadykov13}
R.~Sadykov and F.~Vanderbeck.
\newblock Bin packing with conflicts: A generic branch-and-price algorithm.
\newblock {\em INFORMS Journal on Computing}, 25(2):244--255, 2013.

\bibitem{Saffari22}
S.~Saffari and Y.~Fathi.
\newblock Set covering problem with conflict constraints.
\newblock {\em Computers \& Operations Research}, 143:105763, 2022.

\bibitem{Santos19}
L.~F.~M. Santos, R.~S. Iwayama, L.~B. Cavalcanti, L.~M. Turi, F.~E.
  de~Souza~Morais, G.~Mormilho, and C.~B. Cunha.
\newblock A variable neighborhood search algorithm for the bin packing problem
  with compatible categories.
\newblock {\em Expert Systems with Applications}, 124:209--225, 2019.

\bibitem{MR1956924}
A.~Schrijver.
\newblock {\em Combinatorial optimization. {P}olyhedra and efficiency.},
  volume~24 of {\em Algorithms and Combinatorics}.
\newblock Springer, 2003.

\bibitem{MR795937}
J.~Spinrad.
\newblock On comparability and permutation graphs.
\newblock {\em SIAM Journal on Computing}, 14(3):658--670, 1985.

\bibitem{MR1971502}
J.~P. Spinrad.
\newblock {\em Efficient graph representations}, volume~19 of {\em Fields
  Institute Monographs}.
\newblock American Mathematical Society, Providence, RI, 2003.

\bibitem{MR295938}
A.~Tucker.
\newblock A structure theorem for the consecutive {$1$}'s property.
\newblock {\em Journal of Combinatorial Theory. Series B}, 12:153--162, 1972.

\bibitem{MR1452078}
G.~J. Woeginger.
\newblock A polynomial-time approximation scheme for maximizing the minimum
  machine completion time.
\newblock {\em Operations Research Letters}, 20(4):149--154, 1997.

\bibitem{MR882643}
M.~Yannakakis and F.~Gavril.
\newblock The maximum {$k$}-colorable subgraph problem for chordal graphs.
\newblock {\em Information Processing Letters}, 24(2):133--137, 1987.

\bibitem{MR2403018}
D.~Zuckerman.
\newblock Linear degree extractors and the inapproximability of max clique and
  chromatic number.
\newblock {\em Theory of Computing}, 3:103--128, 2007.

\end{thebibliography}
\end{document}